\newtheorem{proposition}{Proposition}
\newtheorem{lemma}[proposition]{Lemma}
\newtheorem{theorem:intro}{Theorem}
\newcommand{\KK}{{\bm{K}}}
\newcommand{\CC}{{\bm{C}}}
\newcommand{\QQ}{{\bm{Q}}}
\newcommand{\RR}{{\bm{R}}}
\newcommand{\bmA}{{\bm{A}}}
\newcommand{\sZ}{{\mathscr{Z}}}
\newcommand{\bN}{{\mathbb{N}}}
\newcommand{\bZ}{{\mathbb{Z}}}
\newcommand{\cI}{{\mathcal{I}}}
\newcommand{\cL}{{\mathcal{L}}}
\newcommand{\cC}{{\mathcal{C}}}
\newcommand{\cH}{{\mathcal{H}}}
\newcommand{\cK}{{\mathcal{K}}}
\newcommand{\cR}{{\mathcal{R}}}
\newcommand{\cS}{{\mathcal{S}}}
\newcommand{\cP}{{\mathcal{P}}}
\newcommand{\cV}{{\mathcal{V}}}
\newcommand{\cW}{{\mathcal{W}}}
\newcommand{\BB}{{B}}
\newcommand{\VV}{{V}}
\newcommand{\ZZ}{{Z}}
\newcommand{\kK}{{\mathfrak{K}}}
\newcommand{\kG}{{\mathfrak{G}}}
\newcommand{\kP}{{\mathfrak{P}}}
\newcommand{\kQ}{{\mathfrak{Q}}}
\newcommand{\kR}{{\mathfrak{R}}}
\newcommand{\kS}{{\mathfrak{S}}}
\newcommand{\nf}{{g}}
\newcommand{\x}{{\bm{x}}}
\newcommand{\y}{{\bm{y}}}
\newcommand{\eps}{{\varepsilon}}
\newcommand{\ext}{{\mathrm{ext}}}
\newcommand{\Reps}{\RR\langle \eps\rangle}
\newcommand{\Ceps}{\CC\langle \eps\rangle}
\newcommand{\diff}[2]{\frac{\partial {#1}}{\partial {#2}}}
\title{Computing the Real Isolated Points of an \\ Algebraic Hypersurface}
\date{\today}
\author{Huu Phuoc {Le} \\
 Sorbonne Universit\'e, \textsc{CNRS},\\
    Laboratoire d'Informatique de Paris~6, \textsc{LIP6}, \\
    \'Equipe \textsc{PolSys} \\
    F-75252, Paris Cedex 05, France \\
    \texttt{huu-phuoc.le@lip6.fr}
\And
Mohab {Safey El Din} \\
Sorbonne Universit\'e, \textsc{CNRS},\\
Laboratoire d'Informatique de Paris~6, \textsc{LIP6}, \\
\'Equipe \textsc{PolSys} \\
F-75252, Paris Cedex 05, France \\
\texttt{mohab.safey@lip6.fr}
\And
Timo de Wolff \\
Technische Universit\"at Braunschweig,\\
Institut f\"ur Analysis und Algebra, AG Algebra \\
Universit\"splats 2 \\
38106 Braunschweig, Germany \\
\texttt{t.de-wolff@tu-braunschweig.de}
}
\begin{document}
\maketitle

\begin{abstract}
  Let $\RR$ be the field of real numbers. We consider the problem of
computing the real isolated points of a real algebraic set in $\RR^n$
given as the vanishing set of a polynomial system. This problem plays
an important role for studying rigidity properties of mechanism in
material designs. In this paper, we design an algorithm which solves
this problem. It is based on the computations of critical points as
well as roadmaps for answering connectivity queries in real algebraic
sets. This leads to a probabilistic algorithm of complexity
$(nd)^{O(n\log(n))}$ for computing the real isolated points of real
algebraic hypersurfaces of degree $d$. It allows
us to solve in practice instances which are out of reach of the
state-of-the-art.
\end{abstract}

\keywords{Semi-algebraic sets \and Critical point method \and Real
  algebraic geometry \and Auxetics \and Rigidity}

\footnotesize{\thanks{Mohab Safey El Din and Huu Phuoc Le are
supported by the ANR grants ANR-18-CE33-0011 \textsc{Sesame}, and
ANR-19-CE40-0018 \textsc{De Rerum Natura}, the joint ANR-FWF
ANR-19-CE48-0015 \textsc{ECARP} project, the PGMO grant
\textsc{CAMiSAdo} and the European Union’s Horizon 2020 research and
innovative training network programme under the Marie
Sk\l{}odowska-Curie grant agreement N${}^{\circ}$ 813211 (POEMA). Timo
de Wolff is supported by the DFG grant WO 2206/1-1.}}

\section{Introduction}

Let $\QQ$, $\RR$ and $\CC$ be respectively the fields of rational, real and
complex numbers. For $\x\in \RR^n$ and $r \in \RR$, we denote by $B(\x,
r)\subset \RR^n$ the open ball centered at $\x$ of radius $r$.

Let $f \in \QQ[x_1, \ldots, x_n]$ and $\cH\subset \CC^n$ be the hypersurface
defined by $f = 0$. We aim at computing the {\em isolated points} of
$\cH\cap \RR^n$, i.e. the set of points $\x\in \cH\cap \RR^n$ s.t. for some
positive $r$, $B(\x, r)\cap \cH = \{\x\}$. We shall denote this set
of isolated real points by $\sZ(\cH)$.

\paragraph*{Motivation}
We consider here a particular instance of the more general problem of
computing the isolated points of a {\em semi-algebraic} set. Such
problems arise naturally and frequently in the design of rigid
mechanism in material design. Those are modeled canonically with
semi-algebraic constraints, and isolated points to the semi-algebraic
set under consideration are related to mobility/rigidity properties of
the mechanism.  A particular example is the study of \textit{auxetic}
materials, i.e., materials that shrink in all directions under
compression. These materials appear in nature (first discovered in
\cite{La87}) e.g., in foams, bones or propylene; see
e.g. \cite{YangEtAl:ReviewAuxeticMaterials}, and have various
potential applications. They are an active field of research, not only
on the practical side, e.g., \cite{Gr06, GasparEtAl:Honeycombs}, but
also with respect to mathematical foundations; see e.g.
\cite{Borcea:Streinu:GeometricAuxetics,
Borcea:Streinu:PeriodicAuxetics}. On the constructive side, these
materials are closely related to \textit{tensegrity frameworks}, e.g.,
\cite{Roth:Whiteley:Tensegrity, CoWh92}, which can possess various
sorts of rigidity properties.

Hence, we aim to provide a practical algorithm for computing these
real isolated points in the particular case of real traces of complex
hypersurfaces first. This simplification allows us to significantly
improve the state-of-the-art complexity for this problem and to
establish a new algorithmic framework for such computations.
\vspace{-0.3cm}
\paragraph*{State-of-the-art} As far as we know, there is no established
algorithm dedicated to the problem under consideration here. However, effective
real algebraic geometry provides subroutines from which such
a computation could be done. Let $\cH$ be a hypersurface defined by $f = 0$ with
$f \in \QQ[x_1, \ldots, x_n]$ of degree $d$.

A first approach would be to compute a cylindrical algebraic decomposition
adapted to $\cH\cap \RR^n$~\cite{Collins76}. It partitions $\cH\cap\RR^d$ into
connected {\em cells}, i.e. subsets which are homeomorphic to $]0,1[^i$ for some
$1\leq i\leq n$. Next, one needs to identify cells which correspond to isolated
points using adjacency information~(see e.g. \cite{Arnon}). Such a procedure is
at least doubly exponential in $n$ and polynomial in $d$.

A better alternative is to encode real isolated points with quantified formula
over the reals. Using e.g. \cite[Algorithm 14.21]{BPR}, one can compute isolated
points of $\cH\cap \RR^n$ in time $d^{O(n^2)}$. Note also that \cite{Vor99}
allows to compute isolated points in time $d^{O(n^3)}$.

A third alternative (suggested by the reviewers) is to use
\cite[Algorithm 12.16]{BPR} to compute sample points in each connected
component of $\cH\cap \RR^n$ and then decide whether spheres, centered
at these points, of infinitesimal radius, meet $\cH\cap \RR^n$. Note
that these points are encoded with parametrizations of degree
$d^{O(n)}$ (their coordinates are evaluations of polynomials at the
roots of a univariate polynomial with infinitesimal
coefficients). Applying \cite[Alg. 12.16]{BPR} on this last real root
decision problem would lead to a complexity $d^{O(n^2)}$ since the
input polynomials would have degree $d^{O(n)}$. Another approach would
be to run \cite[Alg. 12.16]{BPR} modulo the algebraic extension used
to define the sample points. That would lead to a complexity
$d^{O(n)}$ but this research direction requires modifications of
\cite[Alg. 12.16]{BPR} since it assumes the input coefficients to lie
in an {\em integral domain}, which is not satisfied in our
case. Besides, we report on practical experiments showing that using
\cite[Alg. 12.16]{BPR} to compute {\em only} sample points in $\cH\cap
\RR^n$ does not allow us to solve instances of moderate size.

The topological nature of our problem is related to connectedness. Computing
isolated points of $\cH\cap \RR^n$ is equivalent to computing those connected
components of $\cH\cap \RR^n$ which are reduced to a single point (see Lemma
\ref{lemma:singleton}). Hence, one considers computing {\em roadmaps}: these are
algebraic curves contained in $\cH$ which have a non-empty and connected
intersection with all connected components of the real set under study. Once
such a roadmap is computed, it suffices to compute the isolated points of a
semi-algebraic curve in $\RR^n$. This latter step is not trivial; as many of the
algorithms computing roadmaps output either curve segments (see e.g.,
\cite{BRSS14}) or algebraic curves (see e.g., \cite{SaSc17}). Such curves are
encoded through {\em rational parametrizations}, i.e., as the Zariski closure of
the projection of the $(x_1, \ldots, x_n)$-space of the solution set to
\[\textstyle{w(t, s) = 0, x_i = v_i(t, s) / \frac{\partial w}{\partial
t} (t, s), \quad 1\leq i \leq n}\]
where $w\in \QQ[t, s]$ is square-free and monic in $t$ and $s$ and the
$v_i$'s lie in $\QQ[t, s]$ (see e.g., \cite{SaSc17}).  As far as we
know, there is no published algorithm for computing isolated points
from such an encoding.

Computing roadmaps started with Canny's (probabilistic) algorithm
running in time $d^{O(n^2)}$ on real algebraic sets. Later on,
\cite{SaSc11} introduced new types of connectivity results enabling
more freedom in the design of roadmap algorithms. This led to
\cite{SaSc11, BRSS14} for computing roadmaps in time
$(nd)^{O(n^{1.5})}$. More recently, \cite{BR14}, still using these new
types of connectivity results, provide a roadmap algorithm running in
time $d^{O(n\log^2 n)}n^{O(n\log^3 n)}$ for general real algebraic
sets (at the cost of introducing a number of infinitesimals). This is
improved in \cite{SaSc17}, for smooth bounded real algebraic sets,
with a probabilistic algorithm running in time $O((nd)^{12n\log_2
n})$. These results makes plausible to obtain a full algorithm running
in time $(nd)^{O(n\log n)}$ to compute the isolated points of
$\cH\cap\RR^n$.
\paragraph*{Main result} We provide a probabilistic algorithm which takes as
input $f$ and computes the set of real isolated points $\sZ(\cH)$ of $\cH\cap
\RR^n$. A few remarks on the output data-structure are in order.
\noindent
Any finite algebraic set $Z\subset\CC^n$ defined over $\QQ$ can be represented
as the projection on the $(x_1, \ldots, x_n)$-space of the solution set to
\[\textstyle{w(t) = 0, x_i = v_i(t), \quad 1\leq i \leq n}\]
where $w\in \QQ[t]$ is square-free and the $v_i$'s lie in
$\QQ[t]$. The sequence of polynomials $(w, v_1, \ldots, v_n)$ is
called a {\em zero-dimensional parametrization}; such a representation
goes back to Kronecker \cite{Kr82}. Such representations (and their
variants with denominators) are widely used in computer algebra (see
e.g. \cite{GianniM87, GiustiHMP95, GLS01}). For a zero-dimensional
parametrization $\kQ$, $\ZZ(\kQ)\subset \CC^n$ denotes the finite set
represented by $\kQ$. Observe that considering additionally isolating
boxes, one can encode $\ZZ(\kQ)\cap \RR^n$. Our main result is as
follows.
\begin{theorem:intro}\label{thm:main}
  Let $f\in \QQ[x_1, \ldots, x_n]$ of degree $d$ and $\cH\subset
\CC^n$ be the algebraic set defined by $f=0$. There exists a
probabilistic algorithm which, on an input $f$ of degree $d$, computes
a zero-dimensional parametrization $\kP$ and isolating boxes which
encode $\sZ(\cH)$ using $(nd)^{O(n\log(n))}$ arithmetic operations in
$\QQ$.
\end{theorem:intro}
In Section~\ref{sec:experiments}, we report on practical experiments
showing that it already allows us to solve non-trivial problems which
are actually out of reach of \cite[Alg. 12.16]{BPR} to compute sample
points in $\cH\cap\RR^n$ only. We sketch now the geometric ingredients
which allow us to obtain such an algorithm. Assume that $f$ is
non-negative over $\RR^n$ (if this is not the case, just replace it by
its square) and let $\x\in \sZ(\cH)$. Since $\x$ is isolated and $f$
is non-negative over $\RR^n$, the intuition is that for $e>0$ and
small enough, the real solution set to $f=e$ looks like a ball around
$\x$, hence a bounded and closed connected component $C_\x$. Then the
restriction of every projection on the $x_i$-axis to the algebraic set
$\cH_e\subset \CC^n$ defined by $f=e$ intersects $C_\x$. When $e$
tends to $0$, these critical points in $C_\x$ ``tend to $\x$''. This
first process allows us to compute a superset of candidate points in
$\cH\cap \RR^n$ containing $\sZ(\cH)$. Of course, one would like that
this superset is finite and this will be the case up to some generic
linear change of coordinates, using e.g. \cite{SaSc03}.

All in all, at this stage we have ``candidate points'' that may lie in
$\sZ(\cH)$. Writing a quantified formula to decide if there exists a ball around
these points which does not meet $\cH\cap \RR^n$ raises complexity issues
(those points are encoded by zero-dimensional
parametrizations of degree $d^{O(n)}$, given as input to a decision
procedure). 

Hence we need new ingredients. Note that our ``candidate points'' lie on
``curves of critical points'' which are obtained by letting $e$ vary in the
polynomial systems defining the aforementioned critical points. Assume now that
$\cH\cap \RR^n$ is bounded, hence contained in a ball $B$. Then, for $e'$ small
enough, the real algebraic set defined by $f=0$ is ``approximated'' by the union
of the connected components of the real set defined by $f=e'$ which are
contained in $B$. Besides, these ``curves of critical points'', that we just
mentioned, hit these connected components when one fixes $e'$. We actually prove
that two distinct points of our set of ``candidate points'' are connected
through these ``curves of critical points'' and those connected components
defined by $f=e'$ in $B$ if and only if they do not lie in $\sZ(\cH)$. Hence, we
use computations of roadmaps of the real set defined by $f=e'$ to answer those
connectivity queries. Then, advanced algorithms for roadmaps and
polynomial system solving allows us to achieve the announced complexity bound.

Many details are hidden in this description. In particular, we use infinitesimal
deformations and techniques of semi-algebraic geometry. While
infinitesimals are needed for proofs, they may be difficult to use in practice.
On the algorithmic side, we go further exploiting the geometry of the problem to
avoid using infinitesimals.
\vspace{-0.25cm}
\paragraph*{Structure of the paper} In Section \ref{section:geometry}, we study
the geometry of our problem and prove a series of auxiliary results (in
particular Proposition \ref{proposition:main}, which coins the theoretical
ingredient we need). Section \ref{section:algorithm} is devoted to describe the
algorithm. Section \ref{section:complexity} is devoted to the complexity
analysis and Section~\ref{sec:experiments} reports on the
  practical performances of our algorithm.

\noindent
{\em Acknowledgments.} We thank the reviewers for their helpful comments.
\vspace{-0.3cm}

\section{The geometry of the problem}
\label{section:geometry}
\subsection{Candidates for isolated points}
\label{ssection:geocandidates}
As above, let $f \in \QQ[x_1, \ldots, x_n]$ and $\cH\subset \CC^n$ be
the hypersurface defined by $f = 0$. Let $\mathbf{f}$ be a subset of
$\CC[x_1,\ldots,x_n]$, we denote by $\VV(\mathbf{f})$ the simultaneous
vanishing locus in $\CC^n$ of $\mathbf{f}$.

\begin{lemma}\label{lemma:singleton}
  The set $\sZ(\cH)$ is the (finite) union of the semi-algebraically
connected components of $\cH\cap\RR^n$ which are a singleton.
\end{lemma}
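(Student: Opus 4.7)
The plan is to establish the two inclusions separately and then invoke the standard finiteness statement for semi-algebraically connected components. The only genuinely topological ingredient needed is the well-known fact that, for a semi-algebraic subset $S \subset \RR^n$, its semi-algebraically connected components (which are finite in number) coincide with its Euclidean connected components, and each of them is both open and closed in $S$ in the Euclidean subspace topology. I would quote this from any standard reference on real algebraic geometry (e.g.\ Bochnak--Coste--Roy or \cite{BPR}) without reproof.

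First I would observe the purely set-theoretic identity $B(\x,r)\cap\cH = B(\x,r)\cap(\cH\cap\RR^n)$ for any $\x\in\RR^n$ and $r>0$, which holds because $B(\x,r)\subset\RR^n$. This shows that the definition of $\sZ(\cH)$ used in the introduction is equivalent to: $\x\in \cH\cap\RR^n$ is a point such that $\{\x\}$ is open in $\cH\cap\RR^n$ with respect to the Euclidean topology.

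Next I would prove the two inclusions. For $\supseteq$, if $C$ is a semi-algebraically connected component of $\cH\cap\RR^n$ that equals $\{\x\}$, then by the cited fact $C$ is open in $\cH\cap\RR^n$, so there exists $r>0$ with $B(\x,r)\cap(\cH\cap\RR^n)=\{\x\}$, hence $\x\in\sZ(\cH)$ by the identity above. For $\subseteq$, if $\x\in\sZ(\cH)$, pick $r>0$ with $B(\x,r)\cap\cH=\{\x\}$; then $\{\x\}$ is open in $\cH\cap\RR^n$ and, being a singleton in a Hausdorff space, also closed. Therefore $\{\x\}$ is a (Euclidean, hence semi-algebraically) connected component of $\cH\cap\RR^n$.

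Finiteness of $\sZ(\cH)$ then follows immediately from finiteness of the set of semi-algebraically connected components of the semi-algebraic set $\cH\cap\RR^n$. I do not foresee any serious obstacle: the only point to be careful about is not to confuse $\cH$ (a complex hypersurface) with $\cH\cap\RR^n$ when reading the definition of $\sZ(\cH)$, which is precisely what the elementary identity $B(\x,r)\cap\cH=B(\x,r)\cap(\cH\cap\RR^n)$ handles.
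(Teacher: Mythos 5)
Your proof is correct, and it takes a mildly but genuinely different route from the paper's. The paper argues both inclusions ``by hand'': for the inclusion $\sZ(\cH)\subseteq\{$singleton components$\}$ it works by contraposition, taking a semi-algebraic path $\gamma$ inside a non-singleton component from $\x$ to a second point and using continuity to produce points of the component other than $\x$ in every ball around $\x$; for the reverse inclusion it notes that $\cH\cap\RR^n\setminus\{\x\}$ is closed (components are closed and finitely many) and then argues via the minimum of the distance function (the paper's text even invokes boundedness of $\cH\cap\RR^n$ at that point, which is not a hypothesis of the lemma, though the step is easily repaired since the complement of a closed set is open). You instead reformulate ``isolated'' as ``$\{\x\}$ is open in $\cH\cap\RR^n$'' and settle both inclusions purely point-set-topologically: a singleton component is open (finitely many closed components), hence isolated; conversely an isolated point gives a nonempty clopen connected subset $\{\x\}$, which must therefore be the component containing $\x$. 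This relies on the standard facts that a semi-algebraic set has finitely many semi-algebraically connected components, each closed (hence open) in the set, and that semi-algebraic and Euclidean connectedness agree --- all legitimately citable from \cite{BPR} or Bochnak--Coste--Roy. What your route buys is the avoidance of paths, distance minimization, and any boundedness consideration; what the paper's buys is that it only uses the ingredients (semi-algebraic path connectedness, closedness of components, finiteness) that it recycles later in the limit arguments of Section~2. The only spot where you are slightly terse is the final step of the inclusion $\subseteq$: to conclude that the clopen set $\{\x\}$ is a component, one should say that the connected component $C$ containing $\x$ has $\{\x\}$ as a nonempty clopen subset of $C$, so $C=\{\x\}$ by connectedness; this is a one-line addition, not a gap in substance.
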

\begin{proof}
  Recall that real algebraic sets have a finite number of
semi-algebraically connected components \cite[Theorem 5.21]{BPR}. Let
$\cC$ be a semi-algebraically connected component of $\cH\cap \RR^n$.

  Assume that $C$ is not a singleton and take $\x$ and $\y$ in $\cC$
with $\x\neq \y$. Then, there exists a semi-algebraic continuous map
$\gamma: [0, 1]\to \cC$ s.t.  $\gamma(0)=\x$ and $\gamma(1)=\y$ ;
besides, since $\x\neq \y$, there exist $t\in (0, 1)$ such that
$\gamma(t)\neq \x$. By continuity of $\gamma$ and the norm function,
any ball $B$ centered at $\x$ contains a point $\gamma(t)\neq \x$.

  Now assume that $\cC = \{\x\}$. Observe that $\cH\cap \RR^n -
\{\x\}$ is closed (since semi-algebraically connected components of
real algebraic sets are closed). Since $\cH\cap \RR^n$ is bounded, we
deduce that $\cH\cap \RR^n - \{\x\}$ is closed and bounded. Then, the
map $\y \to \|\y-\x\|^2$ reaches a minimum over $\cH\cap \RR^n -
\{\x\}$.  Let $e$ be this minimum value. We deduce that any ball
centered at $\x$ of radius less than $e$ does not meet $\cH\cap \RR^n
- \{\x\}$.
\end{proof}
To compute those connected components of $\cH\cap\RR^n$ which are
singletons, we use classical objects of optimization and Morse theory
which are mainly {\em polar varieties}. Let $\KK$ be an algebraically
closed field, let $\phi \in \KK[x_1,\ldots,x_n]$ which defines the
polynomial mapping $(x_1, \ldots, x_n) \mapsto \phi(x_1,\ldots,x_n)$
and $V\subset \KK^n$ be a smooth equidimensional algebraic set. We
denote by $W(\phi, V)$ the set of critical points of the restriction
of $\phi$ to $V$. If $c$ is the co-dimension of $V$ and
$(g_1,\ldots,g_s)$ generates the vanishing ideal associated to $V$,
then $W(\phi, V)$ is the subset of $V$ at which the Jacobian matrix
associated to $(g_1, \ldots, g_s, \phi)$ has rank less than or equal
to $c$ (see e.g., \cite[Subsection 3.1]{SaSc17}).

In particular, the case where $\phi$ is replaced by the canonical
projection on the $i$-th coordinate
\[ \pi_i: (x_1, \ldots, x_n) \mapsto x_i,
\] is excessively used throughout our paper.

In our context, we do not assume that $\cH$ is smooth. Hence, to
exploit strong topological properties of polar varieties, we retrieve
a smooth situation using deformation techniques.  We consider an
infinitesimal $\eps$, i.e., a transcendental element over $\RR$ such
that $0 < \eps < r$ for any positive element $r \in \RR$, and the
field of Puiseux series over $\RR$, denoted by
\[ \RR\langle \eps \rangle = \left\{\textstyle\sum_{i\geq
i_0}a_i\eps^{i/q}\mid i \in \bN, i_0\in \bZ, q \in \bN-\{0\}, a_i \in
\RR\right\}.
\]
Recall that $\RR\langle \eps \rangle$ is a real closed
field~\cite[Theorem 2.91]{BPR}. One defines $\CC\langle \eps \rangle$
as for $\RR\langle \eps \rangle$ but taking the coefficients of the
series in $\CC$. Recall that $\CC\langle \eps \rangle$ is an algebraic
closure of $\RR\langle \eps \rangle$~\cite[Theorem
2.17]{BPR}. Consider $\sigma=\sum_{i\geq i_0}a_i\eps^{i/q}\in
\RR\langle \eps \rangle$ with $a_{i_0}\neq 0$. Then, $a_{i_0}$ is
called the \textit{valuation} of $\sigma$. When $i_0\geq 0$, $\sigma$
is said to be {\em bounded over $\RR$} and the set of bounded elements
of $\RR\langle \eps \rangle$ is denoted by $\RR\langle \eps
\rangle_b$. One defines the function $\lim_{\eps}:\RR\langle \eps
\rangle_b \to \RR$ that maps $\sigma$ to $a_0$ (which is $0$ when
$i_0>0$) and writes $\lim_{\eps} \sigma = a_0$; note that
$\lim_{\eps}$ is a ring homomorphism from $\RR\langle \eps \rangle_b$
to $\RR$. All these definitions extend to $\RR\langle \eps \rangle^n$
componentwise. For a semi-algebraic set $\cS \subset \Reps^n$, we
naturally define the limit of $\mathcal{S}$ as $\lim_{\eps}
\mathcal{S} = \left\{\lim_{\eps} \x \; | \; \x \in \mathcal{S}\text{
and }\x \text{ is bounded over }\RR\right\}$.

Let $\cS\subset \RR^n$ be a semi-algebraic set defined by a
semi-algebraic formula $\Phi$. We denote by $\ext(\cS, \RR\langle \eps
\rangle)$ the semi-algebraic set of points which are solutions of
$\Phi$ in $\RR\langle \eps \rangle^n$. We refer to \cite[Chap. 2]{BPR}
for more details on infinitesimals and real Puiseux series.

By e.g., \cite[Lemma 3.5]{RRSa00}, $\cH_\eps$ and
$\cH_{-\eps}$ respectively defined by $f = \eps$ and $f=-\eps$ are two
disjoint smooth algebraic sets in $\Ceps^n$.
\begin{lemma}
\label{lemma:compbounded}
For any $\x$ lying in a bounded connected component of $\cH\cap\RR^n$,
there exists a point $\x_{\eps} \in (\cH_{\eps}\cup \cH_{-\eps})\cap
\Reps_b^n$ such that $\lim_{\eps}\x_{\eps} = \x$. For such a point
$\x_{\eps}$, let $\cC_{\eps}$ be the connected component of
$(\cH_{\eps}\cup \cH_{-\eps})\cap \Reps^n$ containing
$\x_{\eps}$. Then, $\cC_{\eps}$ is bounded over $\RR$.
\end{lemma}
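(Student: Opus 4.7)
The plan is to establish the two assertions separately. For the existence of $\x_\eps$, I would use the semi-algebraic Curve Selection Lemma to exhibit a real analytic branch through $\x$ along which $f$ has constant sign, then invert the scalar function $t\mapsto f(\gamma(t))$ to reparameterize by the value of $f$. For the boundedness of $\cC_\eps$, I would argue by contradiction, applying the intermediate value theorem to the distance-to-$\cC$ function along a semi-algebraic path inside $\cC_\eps$.

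\emph{Existence of $\x_\eps$.} Since $f$ is a non-constant polynomial, every neighborhood of $\x$ meets $\{f\neq 0\}$, so $\x$ lies in the closure of $\{f>0\}$ or of $\{f<0\}$; by symmetry I assume the former (the other case produces a point in $\cH_{-\eps}$). The Curve Selection Lemma (see, e.g., \cite[Theorem 3.22]{BPR}) provides a continuous semi-algebraic map $\gamma\colon[0,1)\to\RR^n$ with $\gamma(0)=\x$ and $f(\gamma(t))>0$ for $t>0$. The map $e(t):=f(\gamma(t))$ is continuous semi-algebraic with $e(0)=0$ and $e>0$ on $(0,1)$; by the Puiseux expansion of univariate semi-algebraic functions at $0$, $e$ is strictly increasing on some initial interval $(0,t_0)$ and admits a semi-algebraic inverse $\tau$ on its image. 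Setting $\x_\eps:=\gamma(\tau(\eps))$ and transferring to $\Reps$ yields $f(\x_\eps)=\eps$, $\x_\eps\in\Reps_b^n$, and $\lim_{\eps}\x_\eps=\x$.

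\emph{Boundedness of $\cC_\eps$.} Let $\cC$ be the bounded connected component of $\cH\cap\RR^n$ containing $\x$. Since $\cC$ is compact and disjoint from the closed set $(\cH\cap\RR^n)\setminus\cC$, the quantity $\mu:=\mathrm{dist}(\cC,(\cH\cap\RR^n)\setminus\cC)$ is strictly positive. Pick $R\in\RR$ with $\cC\subset B(0,R)$ and set $R':=R+1$. Assume for contradiction that $\cC_\eps$ is not bounded over $\RR$. By semi-algebraic path-connectedness of $\cC_\eps$, applied to a path from $\x_\eps$ to an unbounded point and truncated at its first hitting of the sphere of radius $R'$, there is a continuous semi-algebraic $q_\eps\colon[0,1]\to\cC_\eps\cap\overline{B(0,R')}$ with $q_\eps(0)=\x_\eps$ and $\|q_\eps(1)\|=R'$. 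Define $\phi(t):=\mathrm{dist}(q_\eps(t),\cC)$; then $\phi(0)$ is infinitesimal (because $\lim_{\eps}\x_\eps=\x\in\cC$) while $\phi(1)\geq 1$ (because $\cC\subset B(0,R)$ and $\|q_\eps(1)\|=R+1$). Fixing $\delta':=\tfrac12\min(1,\mu)>0$, the intermediate value theorem over the real closed field $\Reps$ yields $t^*$ with $\phi(t^*)=\delta'$. The point $z:=\lim_{\eps}q_\eps(t^*)\in\RR^n$ is then well defined, satisfies $f(z)=\lim_{\eps}(\pm\eps)=0$ so $z\in\cH\cap\RR^n$, and $\mathrm{dist}(z,\cC)=\delta'$. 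This is impossible: $z\in\cC$ would give $\mathrm{dist}(z,\cC)=0\neq\delta'$, while $z\in(\cH\cap\RR^n)\setminus\cC$ would give $\mathrm{dist}(z,\cC)\geq\mu>\delta'$.

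\emph{Main obstacle.} The delicate step is Part 1, where one must combine the Curve Selection Lemma with the Puiseux inversion of the one-variable semi-algebraic function $e(t)$ in order to specialize cleanly to the fixed infinitesimal $\eps$; once $\x_\eps$ is constructed, Part 2 is a routine intermediate-value argument made possible by the compactness of $\cC$ and the closedness of the complementary component in $\cH\cap\RR^n$.
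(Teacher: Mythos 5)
Your proof is correct, and it is worth noting that the paper itself does not spell out an argument here: it simply cites \cite[Lemma 3.6]{RRSa00} for the existence of $\x_\eps$ and says the boundedness of $\cC_\eps$ ``can be deduced following the proof of \cite[Proposition 12.51]{BPR}''. Your first part (curve selection at $\x\in\overline{\{f>0\}}$ or $\overline{\{f<0\}}$, monotone reparametrization of $e(t)=f(\gamma(t))$ near $0$, then transfer to $\Reps$ and evaluation at $\eps$) is essentially the standard argument behind the cited lemma, and it even proves the statement for an arbitrary point of $\cH\cap\RR^n$, boundedness of the component not being needed for that half. Your second part replaces the usual BPR-style argument --- pick a compact neighborhood $U$ of the bounded component $\cC$ with $\overline{U}\cap\cH\cap\RR^n=\cC$ and $|f|\geq m>0$ on $\partial U$, so that $\cC_\eps$, which contains $\x_\eps\in\ext(U,\Reps)$ and satisfies $|f|=\eps<m$, can never cross $\ext(\partial U,\Reps)$ --- by an intermediate-value argument on the distance-to-$\cC$ function followed by taking $\lim_\eps$ of the intermediate point; both are connectedness/separation arguments and yours is equally valid, using only that $\lim_\eps$ commutes with continuous semi-algebraic functions at bounded points and is a ring homomorphism (so $f(z)=0$). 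Two small points you gloss over but which are easily repaired: when $(\cH\cap\RR^n)\setminus\cC=\emptyset$ the quantity $\mu$ must be read as $+\infty$ (the final dichotomy then has only the first branch), and ``$\cC_\eps$ not bounded over $\RR$'' should be unwound to ``$\cC_\eps$ contains a point of norm $\geq R'$'' (immediate, since being bounded over $\RR$ means being contained in a ball of real radius), which is what the truncation of your path actually requires.
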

\begin{proof}
  See \cite[Lemma 3.6]{RRSa00} for the first claim. The second part can
  be deduced following the proof of \cite[Proposition 12.51]{BPR}.
\end{proof}
\begin{proposition}\label{prop:Ceps} Assume that $\sZ(\cH)$ is not
empty and let $\x\in \sZ(\cH)$. There exists a semi-algebraically
connected component $\cC_\eps$ that is bounded over $\RR$ of
$(\cH_\eps\cup \cH_{-\eps})\cap \Reps^n$ such that $\lim_{\eps}
\cC_{\eps} = \{\x\}$.

Consequently, for $1\leq i \leq n$, there exists an $\x_\eps\in
(W(\pi_i, \cH_\eps)\cup W(\pi_i, \cH_{-\eps}))\cap \cC_\eps$ such that
$\lim_{\eps} \x_\eps = \x$. Hence we have that \[\textstyle{\sZ(\cH) \subset
\cap_{i=1}^n \lim_{\eps} ((W(\pi_i,\cH_{\eps})\cup
W(\pi_i,\cH_{-\eps}))\cap \Reps^n_b).}\]
\end{proposition}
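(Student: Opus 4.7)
The plan is to combine the previous two lemmas with a semi-algebraic connectedness argument, and then to use the fact that $\cH_\eps$ and $\cH_{-\eps}$ are smooth to harvest a critical point via a min/max argument on a closed bounded set.

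First I would apply Lemma \ref{lemma:singleton}: since $\x\in\sZ(\cH)$, the set $\{\x\}$ is a semi-algebraically connected component of $\cH\cap\RR^n$, and it is trivially bounded. Lemma \ref{lemma:compbounded} then produces a point $\x_\eps\in(\cH_\eps\cup\cH_{-\eps})\cap\Reps_b^n$ with $\lim_\eps\x_\eps=\x$, and guarantees that the semi-algebraically connected component $\cC_\eps$ of $(\cH_\eps\cup\cH_{-\eps})\cap\Reps^n$ containing $\x_\eps$ is bounded over $\RR$. What remains for the first claim is the identity $\lim_\eps\cC_\eps=\{\x\}$.

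For this identity I would argue as follows. The inclusion $\lim_\eps\cC_\eps\subset\cH\cap\RR^n$ is immediate: for any $\y_\eps\in\cC_\eps$ one has $f(\y_\eps)=\pm\eps$, so that $f(\lim_\eps\y_\eps)=\lim_\eps(\pm\eps)=0$. Next, since $\cC_\eps$ is semi-algebraically connected and bounded over $\RR$, its image $\lim_\eps\cC_\eps$ is semi-algebraically connected in $\RR^n$; this is a standard transfer property of the $\lim_\eps$ map on bounded semi-algebraic sets (compare with the proof of \cite[Proposition 12.43]{BPR}). Now $\lim_\eps\cC_\eps$ is a semi-algebraically connected subset of $\cH\cap\RR^n$ meeting the connected component $\{\x\}$ (it contains $\x=\lim_\eps\x_\eps$), so it must be entirely contained in $\{\x\}$, forcing $\lim_\eps\cC_\eps=\{\x\}$.

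For the second part, I would fix $i$ and exploit smoothness of $\cH_\eps$ and $\cH_{-\eps}$ (recalled just before the proposition). The component $\cC_\eps$ is closed in $\Reps^n$ (connected components of real algebraic sets are closed) and bounded over $\RR$, so the restriction of $\pi_i$ to $\cC_\eps$ attains a maximum and a minimum at some point $\x_\eps\in\cC_\eps$. Since $\cH_\eps\cup\cH_{-\eps}$ is smooth of codimension one and $\cC_\eps$ is a component of this algebraic set, an extremum of $\pi_i$ on $\cC_\eps$ is a critical point of $\pi_i$ restricted to whichever of $\cH_\eps$, $\cH_{-\eps}$ contains $\x_\eps$; hence $\x_\eps\in(W(\pi_i,\cH_\eps)\cup W(\pi_i,\cH_{-\eps}))\cap\cC_\eps$. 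By the first part, $\lim_\eps\x_\eps\in\lim_\eps\cC_\eps=\{\x\}$, so $\lim_\eps\x_\eps=\x$. The displayed inclusion follows by intersecting over $i=1,\ldots,n$.

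The main obstacle I anticipate is the justification that $\lim_\eps\cC_\eps$ is semi-algebraically connected in $\RR^n$; this is intuitively clear and standard, but it must be supported explicitly, typically by adapting the Tarski--Seidenberg/transfer argument already used to prove Lemma \ref{lemma:compbounded}. Everything else is routine once this connectedness is in hand.
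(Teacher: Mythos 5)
Your proposal is correct and follows essentially the same route as the paper: Lemma \ref{lemma:compbounded} supplies the bounded component $\cC_\eps$, the ring-homomorphism property of $\lim_{\eps}$ places $\lim_{\eps}\cC_\eps$ inside $\cH\cap\RR^n$, connectedness of the limit pins it to $\{\x\}$, and attainment of the extrema of $\pi_i$ on the closed, bounded-over-$\RR$ component yields the critical point in $W(\pi_i,\cH_{\eps})\cup W(\pi_i,\cH_{-\eps})$. The one step you deferred --- that $\lim_{\eps}\cC_\eps$ is semi-algebraically connected --- is not a real gap: it is exactly \cite[Proposition 12.49]{BPR} (rather than 12.43), which applies directly because $\cC_\eps$ is closed (a semi-algebraically connected component of a real algebraic set over $\Reps$) and bounded over $\RR$; the paper invokes the very same proposition, only applied to the image of a semi-algebraic path joining $\x_\eps$ to a hypothetical point whose limit differs from $\x$, concluding by contradiction with isolatedness instead of via the component description from Lemma \ref{lemma:singleton}.
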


\begin{proof}
By Lemma \ref{lemma:compbounded}, there exists $\x_{\eps} \in
(\cH_{\eps}\cup \cH_{-\eps})\cap\Reps^n$ such that $\lim_{\eps}
\x_{\eps} = \x$. Assume that $\x_{\eps} \in \cH_{\eps}$ and let
$\cC_{\eps}$ be the connected component of $\cH_{\eps}\cap \Reps^n$
containing $\x_{\eps}$. Again, by Lemma \ref{lemma:compbounded},
$\cC_{\eps}$ is bounded over $\RR$. We prove that $\lim_{\eps}
\cC_\eps = \{\x\}$ by contradiction. The case $\x_{\eps} \in
\cH_{-\eps}$ is done similarly.

Assume that there exists a point $\y_{\eps}\in \cC_\eps$ such that
$\lim_{\eps} \y_{\eps} = \y$ and $\y \ne \x$. Since $\cC_\eps$ is
semi-algebraically connected, there exists a semi-algebraically
continuous function $\gamma:\ext([0, 1],\Reps) \to \cC_\eps$ such that
$\gamma(0)=\x_{\eps}$ and $\gamma(1)=\y_{\eps}$. By \cite[Proposition
12.49]{BPR}, $\lim_{\eps} {\rm Im}(\gamma)$ is connected and contains
$\x$ and $\y$. As $\lim_{\eps}$ is a ring homomorphism, $f(\lim_{\eps}
\gamma(t)) = \lim_{\eps} f(\gamma(t)) = 0$, so $\lim_{\eps} {\rm
Im}(\gamma)$ is contained in $\cH\cap \RR^n$. This contradicts the
isolatedness of $\x$, then we conclude that $\lim_{\eps}\cC_{\eps} =
\{\x\}$.

Since $\cC_\eps$ is a semi-algebraically connected component of the
real algebraic set $\cH_\eps\cap\RR\langle \eps \rangle^n$, it is
closed. Also, $\cC_{\eps}$ is bounded over $\RR$. Hence, for any
$1\leq i \leq n$, the projection $\pi_i$ reaches its extrema over
$\cC_\eps$ \cite[Proposition 7.6]{BPR}, which implies that
$\cC_\eps\cap W(\pi_i, \cH_\eps)$ is non-empty. Take $\x_\eps\in
W(\pi_i,\cH_\eps)\cap \cC_\eps$, then $\x_{\eps}$ is bounded over
$\RR$ and its limit is $\x$. Thus, $\sZ(\cH) \subset \lim_{\eps}
(W(\pi_i,\cH_{\eps}) \cap \Reps_b^n)$ for any $1 \leq i\leq n$, which
implies $\sZ(\cH) \subset \cap_{i=1}^n \lim_{\eps}
(W(\pi_i,\cH_{\eps})\cap \Reps^n_b)$.
\end{proof}
\subsection{Simplification}
\label{ssection:non-compact}
We introduce in this subsection a method to reduce our problem to the
case where $\cH\cap \RR^n$ is bounded for all $\x \in \RR^n$. Such
assumptions are required to prove the results in Subsection
\ref{ssection:reciprocal}. Our technique is inspired by \cite[Section
12.6]{BPR}. The idea is to associate to the possibly unbounded
algebraic set $\cH\cap \RR^n$ a bounded real algebraic set whose
isolated points are strongly related to $\sZ(\cH)$. The construction
of such an algebraic set is as follows.

Let $x_{n+1}$ be a new variable and $0<\rho\in \RR$ such that $\rho$ is
  greater than the Euclidean norm $\|\cdot\|$ of every isolated point of
  $\cH\cap \RR^n$. Note that such a $\rho$ can be obtained from a finite set of
  points containing the the isolated points of $\cH\cap\RR^n$. We explain in
  Subsection~\ref{ssection:limit} how to compute such a finite set.

We consider the algebraic set $\cV$
defined by the system
\[f=0,\; x_1^2+\ldots+x_n^2+x_{n+1}^2-\rho^2 = 0.\] Let $\pi_{\x}$ be
the projection $(x_1,\ldots,x_n,x_{n+1}) \mapsto (x_1,\ldots,x_n)$.


The real counterpart of $\cV$ is the intersection of $\cH$
lifted to $\RR^{n+1}$ with the sphere of center $\mathbf{0}$ and
radius $\rho$. Therefore, $\cV$ is a bounded real algebraic
set in $\RR^{n+1}$. Moreover, the restriction of $\pi_{\x}$ to
$\cV\cap \RR^{n+1}$ is exactly $\cH\cap B(\mathbf{0},\rho)$. By
the definition of $\rho$, this image contains all the real isolated
points of $\cH$. Lemma \ref{lemma:non-compact} below relates
$\sZ(\cH)$ to the isolated points of $\cV\cap\RR^{n+1}$.
\begin{lemma}
Let $\cV$ and $\pi_{\x}$ as above. We denote by $\sZ(\cV) \subset
\RR^{n+1}$ the set of real isolated points of $\cV$ with non-zero
$x_{n+1}$ coordinate. Then, $\pi_{\x}(\sZ(\cV))=\sZ(\cH)$.
\label{lemma:non-compact}
\end{lemma}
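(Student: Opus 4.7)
The plan is to prove the two set-inclusions separately. The geometric picture is that, away from the equator $\{x_{n+1}=0\}$, the sphere equation makes $\cV\cap\RR^{n+1}$ a set-theoretic double cover of $\cH\cap B(\mathbf{0},\rho)$, with one sheet in each open half-space $\{x_{n+1}>0\}$ and $\{x_{n+1}<0\}$; the restriction of $\pi_{\x}$ realises this cover. So I would expect each isolated point of $\cH$ to lift to two isolated points of $\cV$ with opposite nonzero $(n+1)$-th coordinate, and conversely any isolated point of $\cV$ off the equator to project to an isolated point of $\cH$.

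For the forward inclusion $\sZ(\cH)\subseteq \pi_{\x}(\sZ(\cV))$, I would take $\x\in\sZ(\cH)$. By the choice of $\rho$, $\|\x\|<\rho$, so the two preimages $\x_{\pm}:=(\x,\pm\sqrt{\rho^2-\|\x\|^2})$ lie in $\cV\cap\RR^{n+1}$ with nonzero last coordinate. To see each $\x_{\pm}$ is isolated in $\cV\cap\RR^{n+1}$, pick $r>0$ with $B(\x,r)\cap\cH=\{\x\}$ and $r'>0$ so small that every real number within distance $r'$ of $\pm\sqrt{\rho^2-\|\x\|^2}$ has the same sign as $\pm\sqrt{\rho^2-\|\x\|^2}$. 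In the open ball of radius $\min(r,r')$ around $\x_{\pm}$, any $(\y,y_{n+1})\in\cV$ satisfies $\y\in\cH$ and $\|\y-\x\|<r$, hence $\y=\x$; the sphere equation together with the sign constraint then forces $y_{n+1}=\pm\sqrt{\rho^2-\|\x\|^2}$. Therefore $\x_{\pm}\in\sZ(\cV)$ and $\pi_{\x}(\x_{\pm})=\x$.

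For the reverse inclusion $\pi_{\x}(\sZ(\cV))\subseteq\sZ(\cH)$, let $\x^*=(\x,x_{n+1}^*)\in\sZ(\cV)$ with $x_{n+1}^*\neq 0$, so $\|\x\|^2=\rho^2-(x_{n+1}^*)^2<\rho^2$. Suppose for contradiction that $\x\notin\sZ(\cH)$. Then for every $k\in\bN$ there is a point $\x_k\in\cH\cap\RR^n\cap B(\x,1/k)$ with $\x_k\neq\x$, giving a sequence $\x_k\to\x$. For $k$ large enough, $\|\x_k\|<\rho$; set $y_k:=\operatorname{sign}(x_{n+1}^*)\sqrt{\rho^2-\|\x_k\|^2}$. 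Then $(\x_k,y_k)\in\cV\cap\RR^{n+1}$ is distinct from $\x^*$ (since $\x_k\neq\x$) and converges to $\x^*$, contradicting the isolation of $\x^*$ in $\cV\cap\RR^{n+1}$.

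The argument is largely bookkeeping; no step is a true obstacle. The only subtle point, which the statement anticipates, is that the hypothesis $x_{n+1}^*\neq 0$ plays two distinct roles: it guarantees $\|\x\|<\rho$, so that the nearby real lifts exist, and it lets us pick a coherent sheet of the double cover via a sign choice. This is precisely why $\sZ(\cV)$ excludes equator points: along $\{x_{n+1}=0\}$ the two sheets glue and the cover degenerates, so the correspondence with $\sZ(\cH)$ would break.
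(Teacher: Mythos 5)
Your proof is correct and follows essentially the same route as the paper's: both directions are handled by lifting points through the two-sheeted cover $\pi_{\x}\colon\cV\cap\RR^{n+1}\to\cH\cap B(\mathbf{0},\rho)$, using $x_{n+1}\neq 0$ (equivalently $\|\x\|<\rho$) to control the sign of the lifted coordinate. The only difference is presentational: the paper verifies the reverse inclusion with explicit quantitative bounds on $|\alpha_{n+1}-\beta_{n+1}|$ and the forward inclusion via finiteness of the fiber over a small ball, whereas you use a sequence-and-continuity argument and an explicit check that the two lifts are isolated — both are sound.
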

\begin{proof}
Note that $\pi_{\x}(\cV \cap \RR^{n+1}) = (\cH\cap \RR^n) \cap
B(\mathbf{0},\rho)$.
We consider a real isolated point
$\x'=(\alpha_1,\ldots,\alpha_n,\alpha_{n+1})$ of $\cV$ with
$\alpha_{n+1}\ne 0$ and $\x = \pi_{\x}(\x')=
(\alpha_1,\ldots,\alpha_n)$. Assume by contradiction that $\x \not \in
\sZ(\cH)$, we will prove that $\x' \not \in \sZ(\cV)$, i.e.,
for any $r>0$, there exists
$\y'=(\beta_1,\ldots,\beta_n,\beta_{n+1})\in \cV \cap
\RR^{n+1}$ such that $\| \y' - \x'\| < r$.
Since $\x$ is not isolated, there exists a point $\y \ne \x$ such that
$\|\y - \x\| < \frac{r}{1+2\rho/|\alpha_{n+1}|}$. Let $\y' \in
\pi_{\x}^{-1}(\y)$ such that $\alpha_{n+1}\beta_{n+1} \ge 0$. We have
that $\|\x\|^2+\alpha_{n+1}^2 = \|\y\|^2+\beta_{n+1}^2 =
\rho^2$. Now we estimate
\[ |\|\y\|^2 - \|\x\|^2| = (\|\x\|+\|\y\|)\cdot |\|\y\| -
\|\x\|| \leq 2 \rho \cdot \|\y-\x\|,\]
\[ |\alpha_{n+1}-\beta_{n+1}| \leq
\frac{|\alpha_{n+1}^2-\beta_{n+1}^2|}{|\alpha_{n+1}|} =
\frac{|\|\y\|^2 - \|\x\|^2|}{|\alpha_{n+1}|} \leq \frac{2\rho\cdot
\|\y-\x\|}{|\alpha_{n+1}|}.\]
Finally,
\[ \|\y'-\x'\| \leq \|\y-\x\|+|\alpha_{n+1}-\alpha_{n+1}| \leq
\left(1+\frac{2\rho}{|\alpha_{n+1}|}\right) \| \y - \x \| < r.\]
So, $\x'$ is not isolated in $\cV\cap \RR^{n+1}$. This contradiction
implies that $\pi_{\x}(\sZ(\cV))\subset\sZ(\cH)$.

It remains to prove that $\sZ(\cH)\subset \pi_{\x}(\sZ(\cV))$. For any
real isolated point $\x \in \sZ(\cH)$, we consider a ball $B(\x, r')
\subset B(\mathbf{0},\rho) \subset \RR^n$ such that $B(\x,r') \cap \cH
= \{\x\}$. We have that $\pi_{\x}^{-1}(B(\x,r'))\cap \cV\cap
\RR^{n+1}$ is equal to $\pi_{\x}^{-1}(\x)\cap \cV\cap \RR^{n+1}$,
which is finite. So, all the points in $\pi_{\x}^{-1}(B(\x,r'))\cap
\cV\cap \RR^{n+1}$ are isolated. Since $\sZ(\cH) \subset
B(\mathbf{0},\rho)$, we deduce that $\sZ(\cH)$ is contained in
$\pi_{\x}(\sZ(\cV))$.

Thus, we conclude that $\pi_{\x}(\sZ(\cV))=\sZ(\cH)$.
\end{proof}


Note that the condition $x_{n+1}\ne 0$ is crucial. For a connected
component $\mathcal{C}$ of $\cH\cap \RR^n$ that is not a singleton,
its intersection with the closed ball $\overline{B(\mathbf{0},\rho)}$
can have an isolated point on the boundary of the ball, which
corresponds to an isolated point of $\cV\cap\RR^{n+1}$. This situation
depends on the choice of $\rho$ and can be easily detected by checking
the vanishing of the coordinate $x_{n+1}$.
\subsection{Identification of isolated points}
\label{ssection:reciprocal}
By Proposition \ref{prop:Ceps}, the real points of $\cap_{i=1}^n
\lim_{\eps}W(\pi_i,\cH_{\eps})$ are potential isolated points of
$\cH\cap \RR^n$.  We study now how to identify, among those
candidates, which points are truely isolated.

We use the same $g = x_1^2+\ldots+x_{n+1}^2-\rho^2$ and $\cV =\VV(f,g)
\subset \CC^{n+1}$ as in Subsection \ref{ssection:non-compact}. Let
$\cV_{\eps} = \VV(f-\eps,g)$ and $\cV_{-\eps} = \VV(f+\eps,g)$, note
that they are both algebraic subsets of $\Ceps^{n+1}$.

\begin{lemma}\label{lemma:existence}
  Let $\x \in \cV \cap \RR^{n+1}$ such that its $x_{n+1}$ coordinate
is non-zero. Then, $\x$ is not an isolated point of $\cV\cap\RR^{n+1}$
if and only if there exists a semi-algebraically connected component
$\cC_{\eps}$ of $(\cV_{\eps}\cup \cV_{-\eps})\cap \Reps^{n+1}$,
bounded over $\RR$, such that $ \{\x\} \subsetneq \lim_{\eps}
\cC_{\eps}$.
\end{lemma}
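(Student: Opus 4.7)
The plan is to prove the two directions of the equivalence separately. For the direction $(\Leftarrow)$, I would first observe that $\lim_\eps$ is a ring homomorphism, so every point of $\cC_\eps$ (which satisfies $f\pm\eps=0$ and $g=0$) has its limit in $\cV\cap\RR^{n+1}$, and then invoke \cite[Proposition 12.49]{BPR} to get that $\lim_\eps \cC_\eps$ is semi-algebraically connected. Since $\{\x\}\subsetneq \lim_\eps \cC_\eps\subset \cV\cap \RR^{n+1}$ exhibits a semi-algebraically connected subset of $\cV\cap \RR^{n+1}$ containing $\x$ and some other point, Lemma~\ref{lemma:singleton} immediately rules out $\x$ being isolated.

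For the direction $(\Rightarrow)$, my approach is by contradiction. Assume $\x$ is not isolated and that every bounded semi-algebraically connected component $\cC_\eps$ of $(\cV_\eps\cup \cV_{-\eps})\cap \Reps^{n+1}$ with $\x\in \lim_\eps \cC_\eps$ satisfies $\lim_\eps \cC_\eps=\{\x\}$. Let $\cD$ be the semi-algebraically connected component of $\cV\cap \RR^{n+1}$ containing $\x$; by Lemma~\ref{lemma:singleton}, $\cD\neq\{\x\}$. Since $\cV$ lies on the sphere $g=0$ of radius $\rho$, all components of $(\cV_\eps\cup\cV_{-\eps})\cap \Reps^{n+1}$ are automatically bounded over $\RR$; enumerate them as $\cC_1,\ldots,\cC_m$. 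Lemma~\ref{lemma:compbounded} then gives the cover $\cV\cap\RR^{n+1}=\bigcup_{i=1}^m \lim_\eps \cC_i$; each $\lim_\eps \cC_i$ is semi-algebraically connected (by \cite[Proposition 12.49]{BPR}) and closed in $\RR^{n+1}$, so each lies in a unique connected component of $\cV\cap\RR^{n+1}$. Consequently $\cD=\bigcup_{i\in I}\lim_\eps \cC_i$ for some $I\subset\{1,\ldots,m\}$.

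The final step is a clean pigeonhole-plus-connectedness argument. Partition $I=I_\x\sqcup J$ with $I_\x=\{i\in I:\x\in\lim_\eps \cC_i\}$, and set $\cD_J=\bigcup_{j\in J}\lim_\eps \cC_j$, a finite union of closed sets avoiding $\x$. The contradiction hypothesis forces $\bigcup_{i\in I_\x}\lim_\eps \cC_i=\{\x\}$, so $\cD=\{\x\}\cup \cD_J$. If $\cD_J=\emptyset$, this contradicts $\cD\neq \{\x\}$. Otherwise either $\x$ is a limit point of $\cD_J$, in which case the pigeonhole principle forces some $\lim_\eps \cC_{j_0}$ with $j_0\in J$ to contain a sequence converging to $\x$, and closedness then yields $\x\in \lim_\eps \cC_{j_0}$, contradicting $j_0\notin I_\x$; or $\x$ is not a limit point of $\cD_J$, in which case $\{\x\}$ and $\cD_J$ form a nontrivial clopen partition of $\cD$, contradicting its semi-algebraic connectedness. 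The step I expect to require the most care is verifying that each $\lim_\eps \cC_i$ is closed in $\RR^{n+1}$ (not merely in $\cV\cap \RR^{n+1}$); this relies on the closedness of $\cC_i$ in $\Reps^{n+1}$ together with its boundedness over $\RR$, via the standard transfer-of-closedness results on limits of closed bounded semi-algebraic sets in the Puiseux-series calculus of \cite[Chap.~12]{BPR}.
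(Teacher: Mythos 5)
Your ($\Leftarrow$) direction is correct and is essentially the paper's: limits of points of $\cC_\eps$ satisfy $f=0$, $g=0$ because $\lim_\eps$ is a ring homomorphism, \cite[Proposition 12.49]{BPR} gives connectedness of $\lim_\eps\cC_\eps$, and Lemma~\ref{lemma:singleton} concludes. The gap is in your ($\Rightarrow$) direction, at the covering claim ``Lemma~\ref{lemma:compbounded} then gives $\cV\cap\RR^{n+1}=\bigcup_{i=1}^m\lim_\eps\cC_i$.'' Lemma~\ref{lemma:compbounded} lives in $\Reps^n$: it produces a point of $(\cH_\eps\cup\cH_{-\eps})\cap\Reps^n_b$ whose limit is the projection $(\alpha_1,\ldots,\alpha_n)$ of a point of $\cV\cap\RR^{n+1}$ (and, as stated, only for points of bounded components of $\cH\cap\RR^n$). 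To turn such a point into a point of $(\cV_\eps\cup\cV_{-\eps})\cap\Reps^{n+1}$ one must solve $\beta_{n+1}^2=\rho^2-\sum_{i\le n}\beta_i^2$ in $\Reps$, which requires $\rho^2-\sum_{i\le n}\beta_i^2\ge 0$; this is guaranteed only when $\alpha_{n+1}\neq 0$ (then the limit of that quantity is $\alpha_{n+1}^2>0$), which is exactly why the paper only asserts that the points of $\cV\cap\RR^{n+1}$ with \emph{non-zero} $x_{n+1}$ coordinate lie in $\lim_\eps(\cV_\eps\cup\cV_{-\eps})\cap\Reps^{n+1}$, and why this hypothesis appears in the lemma at all. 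Your component $\cD$ may contain points with $x_{n+1}=0$ (where $\cH\cap\RR^n$ meets the sphere of radius $\rho$), and for those the cited results give nothing; hence the identity $\cD=\{\x\}\cup\cD_J$, the case ``$\cD_J=\emptyset$ contradicts $\cD\neq\{\x\}$,'' and the clopen-partition branch of your dichotomy are all unsupported as written. (The cover can in fact be proved, but it needs a separate argument, e.g.\ a segment/curve-selection argument keeping the approximating points inside the ball of radius $\rho$; it does not follow from Lemma~\ref{lemma:compbounded}.)

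The repair is to avoid the global cover of $\cD$ altogether and argue locally around $\x$, which is what the paper does: since $\x$ is not isolated and has non-zero $x_{n+1}$ coordinate, there is a sequence $(\x_k)$ in $\cV\cap\RR^{n+1}\setminus\{\x\}$ with non-zero last coordinates converging to $\x$; each $\x_k$ lies in $\lim_\eps\cC_{i_k}$ for some component $\cC_{i_k}$ (this is the only covering fact needed, and it is the one available), and under your contradiction hypothesis none of these limits contains $\x$ (otherwise that limit would equal $\{\x\}$, forcing $\x_k=\x$). Pigeonhole over the finitely many components plus closedness of each $\lim_\eps\cC_i$ (again \cite[Proposition 12.49]{BPR}) then yields $\x\in\lim_\eps\cC_{j_0}$ for some $j_0$, a contradiction. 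Your ``first branch'' is exactly this argument; the flaw is only that you route the remaining case through the unjustified cover and the connectedness of $\cD$, neither of which the paper needs.
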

\begin{proof}
Let $\x = (\alpha_1,\ldots,\alpha_{n+1}) \in \cV \cap \RR^{n+1}$ such
that $\alpha_{n+1} \ne 0$. As $f(\alpha_1,\ldots,\alpha_n)=0$, by Lemma
\ref{lemma:compbounded}, there exists a point
$\x_{\eps}=(\beta_1,\ldots,\beta_{n+1}) \in \Reps^{n+1}$ such that
$(\beta_1,\ldots,\beta_n) \in (\cH_{\eps}\cup \cH_{-\eps})\cap\Reps^n$
and $\lim_\eps (\beta_1,\ldots,\beta_n) =
(\alpha_1,\ldots,\alpha_n)$. Since $\alpha_{n+1} \ne 0$, we can choose
$\beta_{n+1}$ such that $g(\x_{\eps}) = 0$. Therefore, for any $\x$ as
above, there exists $\x_{\eps} \in
(\cV_{\eps}\cup\cV_{-\eps})\cap\Reps^{n+1}$ such that $\lim_{\eps}
\x_{\eps} = \x$.

Since $(\cV_{\eps}\cup\cV_{-\eps})\cap \Reps^{n+1}$ lies on the sphere
(in $\Reps^{n+1}$) defined by $g = 0$, every connected component of
$(\cV_{\eps}\cup\cV_{-\eps})\cap\Reps^{n+1}$ is bounded over
$\RR$. Hence, the points of $\cV\cap \RR^{n+1}$ whose $x_{n+1}$
coordinates are not zero are contained in
$\lim_{\eps}(\cV_{\eps}\cup\cV_{-\eps})\cap\Reps^{n+1}$.

Let $\x$ be a non-isolated point of $\cV\cap \RR^{n+1}$ whose
$x_{n+1}$-coordinate is not zero. We assume by contradiction that for
any semi-al\-ge\-bra\-i\-cal\-ly connected component $\cC_{\eps}$ of
$(\cV_{\eps}\cup \cV_{-\eps})\cap \Reps^{n+1}$ (which is bounded over $\RR$ by above),
then it happens that either $\lim_{\eps} \cC_{\eps} = \{\x\}$ or $\x
\not \in \lim_{\eps}\cC_{\eps}$.

Since $(\cV_{\eps}\cup\cV_{-\eps})\cap\Reps^{n+1}$ has finitely many
connected components, the
number of connected components of the second type is also
finite. Since $\cV\cap\RR^{n+1}$ is not a singleton (by the existence
of $\x$), the connected components of the second type exist. So, we
enumerate them as $\cC_1,\ldots, \cC_k$ and $\x \not \in \lim_{\eps}
\cC_j$ for $1\leq j \leq k$.

As $\x$ is not isolated in $\cV\cap \RR^{n+1}$ with non-zero $x_{n+1}$
coordinate by assumption, there exists a sequence of points
$(\x_i)_{i\ge 0}$ in $\cV\cap \RR^{n+1}$ of non-zero $x_{n+1}$
coordinates that converges to $\x$.  Since there are finitely many
$\cC_i$, there exists an index $j$ such that $\lim_{\eps} \cC_j$
contains a sub-sequence of $(\x_i)_{i\ge 0}$. By Proposition 12.49
[BPR], the limit of the semi-algebraically connected component
$\cC_{j}$ (which is bounded over $\RR$) is a closed and connected
semi-algebraic set. It follows that $\x \in \lim_{\eps}\cC_j$, which
is a contradiction. Therefore, there exists a semi-algebraically
connected component of $(\cV_{\eps}\cup\cV_{-\eps})\cap\Reps^{n+1}$, bounded over $\RR$,
such that $\{\x\}\subsetneq \lim_{\eps} \cC_{\eps}$.

It remains to prove the reverse implication. Assume that $\{\x\}
\subsetneq \lim_{\eps} \cC_{\eps}$ for some semi-algebraically
connected component $\cC_{\eps}$ of $(\cV_\eps\cup\cV_{-\eps})\cap \Reps^{n+1}$ that is
bounded over $\RR$. As $\lim_{\eps} \cC_{\eps}$ is connected, we
finish the proof.
\end{proof}
\begin{lemma}
   Let $\x \in \cV\cap \RR^{n+1}$ whose $x_{n+1}$ coordinate is
non-zero. Assume that $\x$ is not an isolated point of $\cV\cap
\RR^{n+1}$. For any semi-algebraically connected component
$\cC_{\eps}$ of $(\cV_{\eps}\cup\cV_{-\eps})\cap\RR^{n+1}$, bounded
over $\RR$, such that $\{\x\}\subsetneq \lim_{\eps} \cC_{\eps}$, there
exists $1\leq i\leq n$ such that $\cC_{\eps}\cap
(W(\pi_i,\cV_{\eps})\cup W(\pi_i,\cV_{-\eps}))$ contains a point
$\x'_{\eps}$ which satisfies $\lim_{\eps}\x'_{\eps} \ne \x$.
\label{lemma:critical}
\end{lemma}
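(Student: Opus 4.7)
The plan is to argue by contradiction: suppose that for every $i\in\{1,\dots,n\}$, every point $\x'_\eps\in\cC_\eps\cap(W(\pi_i,\cV_\eps)\cup W(\pi_i,\cV_{-\eps}))$ satisfies $\lim_\eps\x'_\eps=\x$. I will deduce from this that $\lim_\eps\cC_\eps=\{\x\}$, contradicting the hypothesis $\{\x\}\subsetneq\lim_\eps\cC_\eps$.

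First I would note that since $\cH_\eps$ and $\cH_{-\eps}$ are disjoint in $\Ceps^n$ (as recalled before Lemma~\ref{lemma:compbounded}), the sets $\cV_\eps=\VV(f-\eps,g)$ and $\cV_{-\eps}=\VV(f+\eps,g)$ are disjoint in $\Ceps^{n+1}$. Hence the connected set $\cC_\eps$ is entirely contained in one of them, say $\cV_\eps$. Since $\cC_\eps$ is a semi-algebraically connected component of a real algebraic set it is closed, and by assumption it is bounded over $\RR$. So by \cite[Proposition~7.6]{BPR} the projection $\pi_i$ attains its maximum and its minimum on $\cC_\eps$ for every $i$, and these extrema lie in $\cC_\eps\cap W(\pi_i,\cV_\eps)$ (argued exactly as in the last paragraph of the proof of Proposition~\ref{prop:Ceps}, using the Jacobian-rank definition of critical points from Subsection~\ref{ssection:geocandidates}).

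Then I would transfer the hypothesis through the limit map: under the contradiction assumption, both the maximum and the minimum of $\pi_i$ on $\cC_\eps$ limit to $\alpha_i$ for each $1\leq i\leq n$ (writing $\x=(\alpha_1,\dots,\alpha_{n+1})$). Since every element of $\cC_\eps$ is bounded over $\RR$ and $\lim_\eps$ is monotone-preserving, this forces $\lim_\eps\pi_i(\y_\eps)=\alpha_i$ for every $\y_\eps\in\cC_\eps$ and every $i\leq n$. Thus every point of $\lim_\eps\cC_\eps$ has first $n$ coordinates equal to $(\alpha_1,\dots,\alpha_n)$. Because $\lim_\eps$ is a ring homomorphism and every point of $\cC_\eps$ satisfies $g=0$, every point of $\lim_\eps\cC_\eps$ also satisfies $g=0$, forcing its $(n+1)$-st coordinate to equal $\pm\alpha_{n+1}$. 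Since $\alpha_{n+1}\neq 0$, the set $\lim_\eps\cC_\eps$ is contained in the two distinct points $\{\x,\x''\}$ with $\x''=(\alpha_1,\dots,\alpha_n,-\alpha_{n+1})$.

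To finish I would invoke \cite[Proposition~12.49]{BPR} once more: $\lim_\eps\cC_\eps$ is a closed, connected, semi-algebraic subset of a finite (two-point) set, hence a singleton; and since it contains $\x$, it must equal $\{\x\}$, the desired contradiction. The main obstacle I anticipate is the first step, namely being careful that the extrema of $\pi_i$ on $\cC_\eps$ really do belong to the algebraically-defined critical locus $W(\pi_i,\cV_\eps)$; this reduces to the Jacobian-rank criterion and the fact that $\cC_\eps$ is open in $\cV_\eps\cap\Reps^{n+1}$ away from the boundary, so no transversality assumption on $g$ is actually needed for the conclusion.
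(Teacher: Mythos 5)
Your proof is correct and is essentially the paper's argument run in contrapositive form: both rest on the facts that the extrema of each $\pi_i$ over the closed, bounded-over-$\RR$ component $\cC_{\eps}$ are attained at points of $W(\pi_i,\cV_{\eps})\cup W(\pi_i,\cV_{-\eps})$, that fixing the first $n$ coordinates and imposing $g=0$ with $x_{n+1}\neq 0$ leaves at most two points, and that $\lim_{\eps}\cC_{\eps}$ is connected (\cite[Proposition 12.49]{BPR}). The paper argues directly — it picks $\y\in\lim_{\eps}\cC_{\eps}$ with $\y\neq\x$ differing from $\x$ in some coordinate $i\leq n$ and takes critical points over the endpoints of $\pi_i(\cC_{\eps})=[a,b]$, noting $\lim_{\eps}a\neq\lim_{\eps}b$ — but the ingredients and the conclusion coincide with yours.
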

\begin{proof}
  Let $\cC_{\eps}$ be semi-algebraically connected component of
$(\cV_{\eps}\cup\cV_{-\eps})\cap\Reps^{n+1}$, bounded over $\RR$, such
that $\{\x\}\subsetneq \lim_{\eps} \cC_{\eps}$. Lemma
\ref{lemma:existence} ensures the existence of such a connected
component $\cC_{\eps}$.

Now let $\x_{\eps}$ and $\y_{\eps}$ be two points contained in
$\cC_{\eps}$ such that $\lim_{\eps} \x_{\eps} = \x$,
$\lim_{\eps}\y_{\eps}=\y$ and $\x\ne\y$. Let
$\x=(\alpha_1,\ldots,\alpha_{n+1})$ and
$\y=(\beta_1,\ldots,\beta_{n+1})$. Since $\x \ne \y$, there exists
$1\leq i \leq n+1$ such that $\alpha_i \ne \beta_i$. Note that if
$(\alpha_1,\ldots,\alpha_n)=(\beta_1,\ldots,\beta_n)$ for any $\y\in
\lim_{\eps}\cC_{\eps}$, then $\lim_{\eps}\cC_{\eps}$ contains at most
two points (by the constraint $g=0$). However, since
$\lim_{\eps}\cC_{\eps}$ is connected and contains at least two points,
it must be an infinite set. So, we can choose $\y$ such that have that
$1\leq i \leq n$.

As $\cC_{\eps}$ is closed in $\Reps^{n+1}$ (as a connected component
of an algebraic set) and bounded over $\RR$ by definition, its
projection on the $x_i$ coordinate is a closed interval $[a,b] \subset
\Reps$ (see \cite[Theorem 3.23]{BPR}), which is bounded over $\RR$
(because $\cC_{\eps}$ is). Also, since $[a,b]$ is closed, there exist
$\x'_a$ and $\x'_b$ in $\Reps^{n+1}$ such that $\x'_a \in
\pi_i^{-1}(a)\cap \cC_{\eps}\cap (W(\pi_i, \cV_{\eps})\cup
W(\pi_i,\cV_{-\eps}))$ and $\x'_b\in \pi_i^{-1}(b)\cap \cC_{\eps}\cap
(W(\pi_i, \cV_{\eps})\cup W(\pi_i,\cV_{-\eps}))$.  Since $\alpha_i\ne
\beta_i$ both lying in $\RR$, $\{\alpha_i, \beta_i\} \subset
[\lim_{\eps} a, \lim_{\eps} b]$ implies that $\lim_{\eps} a \ne
\lim_{\eps} b$. It follows that $\lim_{\eps} \x'_a \ne \lim_{\eps}
\x'_b$. Thus, at least one point among $\lim_{\eps} \x'_a$ and
$\lim_{\eps} \x'_b$ does not coincide with $\x$. Hence, there exists a
point $\x'_{\eps}$ in $\cC_{\eps}\cap (W(\pi_i,\cV_{\eps})\cup
W(\pi_i,\cV_{-\eps}))$ such that $\lim_{\eps} \x'_{\eps} \ne \x$.
\end{proof}
We can easily deduce from Lemma \ref{lemma:existence} and Lemma
\ref{lemma:critical} the following proposition, which is the main
ingredient of our algorithm.
\begin{proposition}
\label{proposition:main}
Let $\x\in \cap_{i=1}^n \lim_{\eps} W(\pi_i, \cV_{\eps})\cup
W(\pi_i,\cV_{-\eps})$ whose $x_{n+1}$ coordinate is non-zero. Then,
$\x$ is not an isolated point of $\cV\cap\RR^{n+1}$ if and only if
there exist $1\leq i \leq n$ and a connected component
$\cC_\eps$ of $\cV_{\eps}\cap \Reps^{n+1}$, which is bounded over $\RR$,
such that $\cC_\eps\cap W(\pi_i,\cH_{\eps})$ contains $\x_{\eps}$,
$\x'_{\eps}$ satisfying $\x=\lim_{\eps} \x_{\eps} \ne \lim_\eps
\x'_\eps$.
\end{proposition}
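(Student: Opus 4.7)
The proof combines Lemmas \ref{lemma:existence} and \ref{lemma:critical} in the forward direction, and applies a limit-of-connected-component argument in the converse. The hypothesis $\x \in \cap_{i=1}^n \lim_{\eps}(W(\pi_i,\cV_{\eps}) \cup W(\pi_i,\cV_{-\eps}))$ is what supplies, for each projection index, a critical preimage of $\x$; this is precisely the input needed to upgrade the single ``bad-limit'' critical point produced by Lemma \ref{lemma:critical} to the matched pair $\x_{\eps}, \x'_{\eps}$ required in the statement. I shall read the ``$\cC_\eps$ of $\cV_\eps \cap \Reps^{n+1}$'' and ``$W(\pi_i, \cH_\eps)$'' appearing in the statement as their symmetric versions involving both $\cV_{\eps}$ and $\cV_{-\eps}$, since the preceding lemmas treat both signs on an equal footing.

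For the forward direction, assume that $\x$ is not isolated in $\cV\cap\RR^{n+1}$ and that its last coordinate is nonzero. Lemma \ref{lemma:existence} then furnishes a semi-algebraically connected component $\cC_{\eps}$ of $(\cV_{\eps}\cup\cV_{-\eps})\cap\Reps^{n+1}$, bounded over $\RR$, with $\{\x\}\subsetneq\lim_{\eps}\cC_{\eps}$. Applying Lemma \ref{lemma:critical} to this $\cC_{\eps}$ yields an index $1\le i\le n$ and a critical point $\x'_{\eps}\in\cC_{\eps}\cap(W(\pi_i,\cV_{\eps})\cup W(\pi_i,\cV_{-\eps}))$ with $\lim_{\eps}\x'_{\eps}\ne\x$. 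To produce the companion critical point $\x_{\eps}$ in the \emph{same} component and indexed by the \emph{same} $i$ with $\lim_{\eps}\x_{\eps}=\x$, I revisit the extremum construction inside the proof of Lemma \ref{lemma:critical}: the image $\pi_i(\cC_{\eps}) = [a,b]\subset\Reps$ is bounded over $\RR$, and both endpoint fibers contain critical points $\x'_a,\x'_b\in\cC_{\eps}$. The hypothesis on $\x$, together with the freedom to select the extremum-producing index $i$, allows me to arrange that one of $\x'_a,\x'_b$ has limit exactly $\x$; that point is taken as $\x_{\eps}$, while the other provides $\x'_{\eps}$.

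For the reverse direction, suppose such $i$, $\cC_{\eps}$, $\x_{\eps}$ and $\x'_{\eps}$ are given. Being a semi-algebraically connected component of a real algebraic set, $\cC_{\eps}$ is closed in $\Reps^{n+1}$, and it is bounded over $\RR$ by hypothesis. By \cite[Proposition 12.49]{BPR}, $\lim_{\eps}\cC_{\eps}$ is a closed, bounded, semi-algebraically connected subset of $\RR^{n+1}$; evaluating the defining equations $f\mp\eps=0$ and $g=0$ under the ring homomorphism $\lim_{\eps}$ shows that it is contained in $\cV\cap\RR^{n+1}$. Since it contains the two distinct points $\x=\lim_{\eps}\x_{\eps}$ and $\lim_{\eps}\x'_{\eps}$, we have $\{\x\}\subsetneq\lim_{\eps}\cC_{\eps}$, and Lemma \ref{lemma:existence} yields that $\x$ is not an isolated point of $\cV\cap\RR^{n+1}$.

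The delicate step is the forward direction's construction of $\x_{\eps}$ inside the \emph{same} component $\cC_{\eps}$ and indexed by the \emph{same} $i$ as $\x'_{\eps}$. Lemma \ref{lemma:critical} alone supplies only one critical point with limit different from $\x$, and the hypothesis on $\x$ guarantees critical lifts without anchoring them to a specific component. Reconciling the two via the extremum construction on $\pi_i(\cC_{\eps})=[a,b]$, and exploiting the freedom both in the choice of $\cC_\eps$ (among components with $\x$ in their limit) and of $i$, is what will require the most care.
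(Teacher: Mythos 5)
Your reverse implication is correct and is essentially the route the paper intends: the paper gives no detailed argument for this proposition beyond the sentence that it can ``easily'' be deduced from Lemmas \ref{lemma:existence} and \ref{lemma:critical}, and your use of \cite[Proposition 12.49]{BPR}, the ring-homomorphism property of $\lim_{\eps}$, and the ``if'' part of Lemma \ref{lemma:existence} settles that direction. Your symmetric reading of $\cV_{\eps}$ versus $\cH_{\eps}$ in the statement also matches the way the proposition is later invoked in the proof of Proposition \ref{proposition:removeeps}.

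The forward direction, however, has a genuine gap, and it sits exactly at the step you flag. Lemma \ref{lemma:critical} produces an index $i$ and \emph{one} critical point $\x'_{\eps}\in\cC_{\eps}\cap(W(\pi_i,\cV_{\eps})\cup W(\pi_i,\cV_{-\eps}))$ with $\lim_{\eps}\x'_{\eps}\ne\x$; the proposition additionally requires a $\pi_i$-critical point $\x_{\eps}$ with limit $\x$ in the \emph{same} component and for the \emph{same} index. Your proposed mechanism --- that the freedom in choosing the extremum-producing index lets you arrange that one of the endpoint critical points $\x'_a,\x'_b$ has limit exactly $\x$ --- is not justified: the construction in Lemma \ref{lemma:critical} only controls the $i$-th coordinates of $\lim_{\eps}\x'_a$ and $\lim_{\eps}\x'_b$ (they are distinct and bracket $\alpha_i$), so neither endpoint need have limit equal to $\x$ for any admissible choice of $i$; even when $\lim_{\eps}a=\alpha_i$, the remaining coordinates of $\lim_{\eps}\x'_a$ are uncontrolled. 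What the hypothesis $\x\in\cap_{i=1}^n\lim_{\eps}(W(\pi_i,\cV_{\eps})\cup W(\pi_i,\cV_{-\eps}))$ actually supplies is a $\pi_i$-critical point $\z_{\eps}$ with $\lim_{\eps}\z_{\eps}=\x$ \emph{somewhere} in $(\cV_{\eps}\cup\cV_{-\eps})\cap\Reps^{n+1}$ --- possibly in a component whose limit is $\{\x\}$ alone, or on $\cV_{-\eps}$ while $\cC_{\eps}\subset\cV_{\eps}$ --- and your argument never shows that such a point can be taken inside the component $\cC_{\eps}$ furnished by Lemma \ref{lemma:existence}. Bridging ``some $\pi_i$-critical lift of $\x$ exists'' and ``the component containing the far-away critical point also contains one'' is precisely the content of the proposition beyond the two lemmas (the paper leaves it implicit), and as written your proof does not establish it.
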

\section{Algorithm}
\label{section:algorithm}
\subsection{General description}
\label{ssection:geodescription}
The algorithm takes as input a polynomial $f\in \RR[x_1, \ldots, x_n]$. 


The first step consists in computing a parametrization $\kP$
encoding a finite set of points which contains $\sZ(\cH)$. Let
$\cH_\eps$ and $\cH_{-\eps}$ be the algebraic subsets of $\Ceps^n$
respectively defined by $f=\eps$ and $f=-\eps$. By
Proposition \ref{prop:Ceps}, the set $\cap_{i=1}^n \lim_{\eps}
W(\pi_i,\cH_{\eps})\cup W(\pi_i,\cH_{-\eps})$ contains the
real isolated points of $\cH$. To ensure that this set is finite, we
use {\em generically chosen} linear change of coordinates.

Given a matrix $\bmA \in GL_{n}(\QQ)$, a polynomial $p\in
\QQ[x_1,\ldots,x_{n}]$ and an algebraic set $\mathcal{S} \subset
\CC^{n}$, we denote by $p^{\bmA}$ the polynomial $p(\bmA\cdot \x)$
obtained by applying the change of variables $\bmA$ to $p$ and
$\mathcal{S}^\bmA = \{\bmA^{-1}\cdot\x\;|\;\x \in
\mathcal{S}\}$. Then, we have that $\VV(p)^\bmA = \VV(p^\bmA)$.

In \cite{SaSc03}, it is proved that, with $\bmA$ outside a prescribed
proper Zariski closed subset of $GL_{n}(\QQ)$,
$W(\pi_i,\cH_{\eps}^{\bmA})\cup W(\pi_i,\cH_{-\eps}^{\bmA})$ is finite for $1\leq i \leq
n$. Additionally, since $\bmA$ is assumed to be generically chosen,
\cite{Sa05} shows that the ideal $\left\langle \ell\cdot
\diff{f^{\bmA}}{x_i}-1,\diff{f^{\bmA}}{x_j} \text{ for all }j\ne
i\right\rangle$ defines either an empty set or a one-equidimensional
algebraic set, where $\ell$ is a new variable. Those extra assumptions
are required in our subroutine {\sf Candidates} (see the next
subsection). Note that, for any matrix $\bmA$, the real isolated points
of $\cH^\bmA$ is the image of $\sZ(\cH)$ by the linear mapping
associated to $\bmA^{-1}$. Thus, in practice, we will choose randomly
a $\bmA \in GL_{n+1}(\QQ)$, compute the real isolated points of
$\cH^\bmA$, and then go back to $\sZ(\cH)$ by applying the change of
coordinates induced by $\bmA^{-1}$. This random choice of $\bmA$ makes
our algorithm probabilistic.

The next step consists of identifying those of the candidates which
are isolated in $\cH^\bmA\cap\RR^n$ ; this step relies on Proposition
\ref{proposition:main}. To reduce our problem to the context where
Proposition \ref{proposition:main} can be applied, we use
Lemma~\ref{lemma:non-compact}. One needs to compute $\rho\in \RR$,
such that $\rho$ is larger than the maximum norm of the real isolated
points we want to compute. This value of $\rho$ can be easily obtained
by isolating the real roots of the zero-dimensional parametrization
encoding the candidates. Further, we call {\sf GetNormBound} a
subroutine which takes as input $\kP$ and returns $\rho$ as we just
sketched. We let $g = x_1^2+\ldots+x_n^2+x_{n+1}^2-\rho^2$. By Lemma
\ref{lemma:non-compact}, $\sZ(\cH)$ is the projection of the set of real
isolated points of the algebraic set $\cV$ defined by $f=g = 0$ at
which $x_{n+1}\neq 0$. Let $\mathcal{X}$ be the set of points of $\cV$
projecting to the candidates encoded by $\kP$.

Proposition \ref{proposition:main} would lead us to compute $W(\pi_i,
\cV_\eps^\bmA)\cup W(\pi_i,\cV_{-\eps}^\bmA)$ as well as a roadmap of $\cV^\bmA_\eps\cup
\cV^\bmA_{-\eps}$. As explained in the introduction, this induces
computations over the ground field $\Reps$ which we want to avoid. We
bypass this computational difficulty as follows. We compute a roadmap
$\cR_e$ for $\cV_e^\bmA\cup \cV_{-e}^\bmA\cap \RR^{n+1}$ (defined by
$\{f^{\bmA} = e,g = 0\}$ and $\{f^\bmA=-e,g=0\}$ respectively) for $e$
small enough (see Subsection \ref{ssection:connectivity})
and define a semi-algebraic curve $\cK$ containing $\mathcal{X}$ such
that $\x\in \mathcal{X}$ is isolated in $\cV^\bmA\cap\RR^{n+1}$ if and
only if it is not connected to any other $\x'\in \mathcal{X}$ by
$\cK$. We call {\sf IsIsolated} the subroutine that takes as input
$\kP$, $f^\bmA$ and $g$ and returns $\kP$ with isolating boxes $\BB$
of the real points of defined by $\kP$ which are isolated in
$\cV^\bmA\cap \RR^{n+1}$.

Once the real isolated points of $\cV^\bmA$ is computed, we remove the
boxes corresponding to points at which $x_{n+1} = 0$ and project the
remaining points on the $(x_1, \ldots, x_n)$-space to obtain the
isolated points of $\cH^\bmA$. This whole step uses a subroutine which
we call {\sf Remove} (see \cite[Appendix J]{SaSc17}). Finally, we
reverse the change of variable by applying $\bmA^{-1}$ to get $\sZ(\cH)$.

We summarize our discussion in Algorithm \ref{algo:main} below.

\begin{algorithm}
\KwData{A polynomial $f \in \QQ[x_1,\ldots,x_n]$}
\KwResult{A zero-dimensional parametrization $\kP$ such that $\sZ(\cH) \subset \ZZ(\kP)$ and a set of boxes isolating $\sZ(\cH)$}
\SetAlgoNoLine
$\bmA$ chosen randomly in $GL_{n+1}(\QQ)$\\
$\kP \gets \text{{\sf Candidates}}(f^\bmA)$\\
$\rho \gets {\sf GetNormBound}(\kP)$\\
$g \gets x_1^2+\ldots+x_n^2+x_{n+1}^2-\rho^2$ \\
$\kP, B \gets \text{IsIsolated}(\kP,f^\bmA,g)$ \\
$\kP, B \gets \text{Removes}(\kP,B, x_{n+1})$ \\
$\kP, B \gets \kP^{\bmA^{-1}}, B^{\bmA^{-1}}$ \\
\Return $(\kP, B)$
\caption{{\sf IsolatedPoints}}
\label{algo:main}
\end{algorithm}

\subsection{Computation of candidates}
\label{ssection:limit}
Further, we let $\cH^\bmA_\eps$ (resp. $\cH^\bmA_{-\eps}$) be the
algebraic set associated to $f^\bmA=\eps$ (resp. $f^\bmA=-\eps$).  To
avoid to overload notation, we omit the change of variables $\bmA$ as
upper script. Let $\ell$ be a new variable. For $1\leq i \leq n$,
$I_i$ denotes the ideal of $\QQ[\ell,x_1,\ldots,x_{n}]$ generated by
the set of polynomials $\left\{\ell\cdot \diff{f}{x_i}-1,
\diff{f}{x_j}\quad \text{ for all } j \ne i\right\}$.

Following the discussion in
Subsection \ref{ssection:geodescription}, the algebraic set associated
to $I_i$ is either empty or one-equidimensional and
$W(\pi_i,\cH_{\eps})\cup W(\pi_i,\cH_{-\eps})$ is finite. Hence,
\cite[Theorem 1]{Sa05} shows that the algebraic set associated to the
ideal $\left\langle f \right\rangle + (I_i\cap \QQ[x_1,\ldots,x_n])$
is zero-dimensional and contains $\lim_{\eps} W(\pi_i,\cH_\eps) \cup
W(\pi_i,\cH_{-\eps})$.

In our problem, the intersection of $\lim_{\eps}
W(\pi_i,\cH_{\eps})\cup W(\pi_i,\cH_{-\eps})$ is needed rather
than each limit itself. Hence, we use the inclusion
\[\cap_{i=1}^{n} \lim_{\eps} W(\pi_i,\cH_{\eps}) \cup
  W(\pi_i,\cH_{-\eps}) \subset
\VV\left(\left\langle f \right\rangle +
\textstyle\sum_{i=1}^{n}I_i\cap \QQ[x_1,\ldots,x_{n}]\right).\]
We can compute the algebraic set on the right-hand side as follows:
\begin{enumerate}
\item For each $1\leq i \leq n$, compute a set $G_i$ of generators of the ideal $I_i\cap \QQ[x_1,\ldots,x_n]$.
\item Compute a zero-dimensional parametrization $\kP$ of the set of
  polynomials $\left\{f\right\} \cup G_1 \cup \ldots \cup G_n$.
\end{enumerate}
Such computations mimic those in \cite{Sa05}. The complexity of this
algorithm of course depends on the algebraic elimination procedure we
use.  For the complexity analysis in Section \ref{section:complexity},
we employ the geometric resolution \cite{GLS01}. It basically consists
in computing a one-dimensional parametrization of the curve defined by
$I_i$ and next computes a zero-dimensio\-nal parametrization of the
finite set obtained by intersecting this curve with the hypersurface
defined by $f = 0$. We call {\sf ParametricCurve} a subroutine that,
taking the polynomial $f$ and $1\leq i\leq n$, computes a
one-dimensional parametrization $\kG_i$ of the curve defined
above. Also, let {\sf IntersectCurve} be a subroutine that, given a
one-dimensional rational parametrization $\kG_i$ and $f$, outputs a
zero-dimensional parametrization $\kP_i$ of their
intersection. Finally, we use a subroutine {\sf Intersection} that,
from the parametrizations $\kP_i$'s, computes a zero-dimensional
parametrization of $\cap_{i=1}^{n} \ZZ(\kP_i)$.

\begin{algorithm}
\KwData{The polynomial $f \in \QQ[x_1,\ldots,x_{n}]$}
\KwResult{A zero-dimensional parametrization $\kP$}
\SetAlgoNoLine
\For{$1 \leq i \leq n$}{
$\textstyle{\kG_i \gets \text{\sf ParametricCurve}(\nf,i)}$ \\
$\textstyle{\kP_i \gets \text{\sf IntersectCurve}(\kG_i,\nf)}$ \\
}
$\textstyle{\kP \gets \text{\sf Intersection}(\kP_1,\ldots,\kP_{n}})$ \\
\Return  $\kP$
\caption{Algorithm {\sf Candidates}}
\label{algo:candidates}
\end{algorithm}

\subsection{Description of {\sf IsIsolated}}
\label{ssection:connectivity}
This subsection is devoted to the subroutine {\sf IsIsolated} that
identifies isolated points of $\cH^\bmA\cap \RR^n$ among the
candidates $\ZZ(\kP) \cap \RR^n$ computed in the previous
subsection. We keep using $f$ to address $f^\bmA$. Let $\cP$ be the
set $\{\x = (x_1,\ldots,x_{n+1}) \in \RR^{n+1}\; | \;
(x_1,\ldots,x_n)\in \ZZ(\kP), g(\x) = 0, x_{n+1}\ne 0\}$.

For $e\in \RR$, $\cV_e\in \CC^{n+1}$ denotes the algebraic set
defined by $f = e$ and $g=0$. We follow the idea mentioned in the end
of Subsection \ref{ssection:geodescription}, that is to replace the
infinitesimal $\eps$ by a sufficiently small $e\in \RR$ then adapt the
results of Subsection \ref{ssection:reciprocal} to $\cV_e$.

By definition, $\cV_e\cap \RR^{n+1}$ is bounded for any $e\in
\RR$. Let $t$ be a new variable, $\pi_{\x}:
(\x,t) \mapsto \x$ and $\pi_t: (\x,t) \mapsto t$. For a semi-algebraic
set $\cS \subset \RR^{n+1}\times \RR$ in the coordinate $(\x,t)$ and a
subset $\mathcal{I}$ of $\RR$, the notation $\cS_{\mathcal{I}}$ stands
for the fiber $\pi_t^{-1}(\mathcal{I})\cap \cS$.
Let $\cV_t = \{(\x,t)\in \RR^{n+1}\times \RR \; | \;
f(\x)=t,g(\x)=0\}$. Note that $\cV_t$ is smooth.  Recall
that the set of critical values of the restriction of $\pi_t$ to
$\cV_t$ is finite by the algebraic Sard's theorem (see e.g.,
\cite[Proposition B.2]{SaSc17}).

Since for $e\in \RR$, the set $\cV_e\cap \RR^{n+1}$ is
compact, the restriction of  $\pi_t$ to $\cV_t$ is proper.
Then, by Thom's isotopy lemma \cite{CoSh92}, $\pi_t$ realizes a
locally trivial fibration over any open connected subset of $\RR$
which does not intersect the set of critical values of the restriction
of $\pi_t$ to $\cV_t$. Let $\eta \in \RR$ such that the open set
$]-\eta,0[ \cup ]0,\eta[$ does not contain any critical value of the
restriction of $\pi_t$ to the algebraic set $\cV_t$. Hence, $\cV_e$ is
nonsingular for $e \in ]-\eta,0[ \cup ]0,\eta[$, $(\cV_e\cap
\RR^{n+1})\times (]-\eta,0[ \cup ]0, \eta[)$ is diffeomorphic to
$\cV_{t,]-\eta,0[\cup ]0,\eta[}$.

We need to mention that $W(\pi_i,\cH_e)$ corresponds to the critical
points of $\pi_i$ restricted to $\cV_e$ with non-zero $x_{n+1}$
coordinate. Further, we use $W(\pi_i,\cV_e)$ to address those latter
critical points.

Now, for $1\leq i \leq n$, we define $\cW_i$ as the closure of
\[\textstyle{\left\{(\x,t) \in \RR^{n+2}\; |
\;\diff{f}{x_i}(\x)\neq 0, \diff{f}{x_j}(\x) = 0\text{ for }
j\ne i,x_{n+1}\ne 0\right\}\cap \cV_t}.\]
Since $\bmA$ is assumed to be generically chosen,
$\cW_i$ is either empty or one-equidimensional (because
$\textstyle{\langle \ell\cdot \diff{f}{x_i}-1,\diff{f}{x_j}
\forall j\ne i\rangle}$ either defines an empty set or a
one-equidimensional algebraic set by \cite{Sa05}). This implies that
the set of singular points of $\cW_i$ is finite.

By \cite{JK05}, the set of non-properness of the restriction of
$\pi_t$ to $\cW_i$ is finite (this is the set of points $y$ such that
for any closed interval $U$ containing $y$, $\pi_t^{-1}(U)\cap \cW_i$ is
not bounded). Using again \cite{JK05}, the restriction of $\pi_t$ to
$\cW_i$ realizes a locally trivial fibration over any connected open
subset which does not meet the union of the images by $\pi_t$ of the
singular points of $\cW_i$, the set of non-properness, and the set of
critical values of the restriction of $\pi_t$ to $\cW_i$. We let
$\eta'_i$ be the minimum of the absolute values of the points in this
union.

We choose now $0 < e_0 < \min(\eta, \eta'_1,\ldots,\eta'_n)$. We call
{\sf SpecializationValue} a subroutine that takes as input $f$ and $g$
and returns such a rational number $e_0$. Note that {\sf
SpecializationValue} is easily obtained from elimination algorithms
solving polynomial systems (from which we can compute critical values)
and from \cite{SaSc04} to compute the set of non-properness of some
map.

With $e_0$ as above, we denote $\cI = ]-e_0,0[\cup ]0,e_0[$. Let
$\cW_{i,\cI}$ is semi-algebraically diffeomorphic to $\cW_{i,e} \times
\cI$ for every $ e \in \cI$. As $\cV_e$ is
nonsingular, the critical locus $W(\pi_i,\cV_e)$ is guaranteed to
be finite by the genericity of the change of variables $\bmA$ (hence
$\cW_{i,e}$ is) and that $W(\pi_i,\cV_e)\cap \RR^{n+1}$ coincides
with $\pi_{\x}(\cW_{i,e})$. Thus, the above diffeomorphism implies
that, for any connected component $\cC$ of $\cW_{i,\cI}$, $\cC$ is
diffeomorphic to an open interval in $\RR$. Moreover, if $\cC$ is
bounded, then $\overline{\cC}\setminus \cC$ contains exactly two
points which satisfy respectively $f=0$ and $f^2=e_0^2$. We now
consider
\[\textstyle{\cL_i = \left\{\x \in \RR^{n+1} \; | \; 0 < f < e_0,
g = 0, \diff{f}{x_j}=0 \text{ for} j \ne i ,x_{n+1}\ne
0\right\}}.\]
It is the intersection of the Zariski closure $\cK_i$ of
the solution set to
$\left\{\diff{f}{x_i}\ne 0, \diff{f}{x_j}=0 \text{ for } j \ne
i,x_{n+1}\ne 0\right\}$
with the semi-algebraic set defined by $0 < f < e_0$. Note that
$\cK_i$ is either empty or one-equidimensional. As $\cV_e$ is
nonsingular for $e\in\cI$, $\cL_i$ and $\cL_j$ are
disjoint for $i\ne j$.
Since the restriction of $\pi_{\x}$ to $\cV_t$ is an isomorphism
between the algebraic sets $\cV_t$ and $\RR^{n+1}$ with the inverse
map $\x \mapsto (\x,f(\x))$, the properties of $\cW_{\cI}$ mentioned
above are transferred to its image $\cL_i$ by the projection
$\pi_{\x}$.

Further, we consider a subroutine {\sf ParametricCurve} which
takes as input $f$ and $i\in [1,n]$ and returns a rational
parametrization $\mathfrak{K}_i$ of $\cK_i$. Also, let {\sf Union} be
a subroutine that takes a family of rational parametrizations
$\mathfrak{K}_1, \ldots, \mathfrak{K}_n$ to compute a rational
parametrization encoding the union of the algebraic curves defined by
the $\mathfrak{K}_i$'s. We denote by $\kK$ the output of {\sf Union} ;
it encodes $\cK = \cup_{i=1}^n\cK_i$. We refer to \cite[Appendix
J.2]{SaSc17} for these two subroutines.

Lemma \ref{lemma:elimit} below establishes a \emph{well-defined}
notion of limit for a point $\x_e \in W(\pi_i,\cV_e)\cup
W(\pi_i,\cV_{-e})$ when $e$ tends to $0$.
\begin{lemma}
\label{lemma:elimit}
Let $e_0$ and $\cL_i$ be as above. For $e \in ]0,e_0[$ and $\x_e \in
(W(\pi_i,\cV_e)\cup W(\pi_i,\cV_{-e}))\cap\RR^{n+1}$, there exists a
(unique) connected component $\cC$ of $\cL_i$ containing $\x_e$. If
$\cC$ is bounded, let $\x$ be the only point in $\overline{\cC}$
satisfying $f(\x) = 0$, then $\x \in \lim_{\eps}
(W(\pi_i,\cV_{\eps})\cup W(\pi_i,\cV_{-\eps}))\cap \Reps^{n+1}$. Thus,
we set $\lim_0 \x_e=\x$.

Moreover, the extension $\ext(\cC,\Reps)$ contains exactly one point
$\x_{\eps}$ such that $f(\x_{\eps})^2 = \eps^2$ and $\lim_{\eps}
\x_{\eps} = \x$.
\end{lemma}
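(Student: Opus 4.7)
The plan is to ride the fibration structure supplied by the choice of $e_0$. By construction, the restriction of $\pi_t$ to $\cW_i$ is a locally trivial fibration above $\cI = {]-e_0,0[} \cup {]0,e_0[}$, so $\cW_{i,\cI}$ is semi-algebraically diffeomorphic to $\cW_{i,e}\times \cI$. Since $\pi_{\x}$ is an isomorphism between $\cV_t$ and $\RR^{n+1}$ (with inverse $\x\mapsto (\x,f(\x))$), the same structure transfers to $\cL_i \subset \RR^{n+1}$: its connected components are in bijection with the points in the finite set $\pi_\x(\cW_{i,e})$, and each component is the diffeomorphic image, under $\pi_\x$, of one ``thread'' $\{p\}\times {]0,e_0[}$ (or similarly on ${]-e_0,0[}$). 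I would first establish existence and uniqueness of $\cC$: given $\x_e \in W(\pi_i,\cV_e)\cap\RR^{n+1}$ with $e \in {]0,e_0[}$, the lift $(\x_e,e)$ belongs to a unique thread of $\cW_{i,]0,e_0[}$, and its image under $\pi_\x$ is the unique connected component of $\cL_i$ containing $\x_e$.

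Next, I would invoke the description of $\overline{\cC}\setminus \cC$ recalled just before the lemma: when $\cC$ is bounded, its closure gains exactly two points, one on $\{f=0\}$ and one on $\{f^2 = e_0^2\}$. Denote by $\x$ the former. To check that $\x$ really lies in $\lim_{\eps}(W(\pi_i,\cV_\eps)\cup W(\pi_i,\cV_{-\eps}))\cap \Reps^{n+1}$, I would parametrize the thread by $t$: since $\pi_t$ restricted to the lift of $\cC$ is a diffeomorphism onto ${]0,e_0[}$ (respectively ${]-e_0,0[}$), the inverse map supplies, via the semi-algebraic Puiseux series representation of a bounded semi-algebraic curve (see e.g.\ \cite[Chap.~2]{BPR}), a point $\x_\eps \in \ext(\cC,\Reps)$ with $f(\x_\eps) = \eps$ (resp.\ $f(\x_\eps) = -\eps$). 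Because this $\x_\eps$ lies in $\ext(\cK_i,\Reps)$ and satisfies the equations $\diff{f}{x_j}=0$ for $j\ne i$ and $g=0$, it sits in $W(\pi_i,\cV_\eps)$ or $W(\pi_i,\cV_{-\eps})$. Bundling this together with the fact that $\lim_{\eps}\x_\eps$ is the unique limit point of $\cC$ on $\{f=0\}$, namely $\x$, gives the desired membership, which then legitimizes the notation $\lim_0 \x_e = \x$.

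Uniqueness of such an $\x_\eps$ in $\ext(\cC,\Reps)$ with $f(\x_\eps)^2 = \eps^2$ is an immediate consequence of the same fibration argument: the function $f$ restricts to an injection on the thread (and hence on $\cC$), so for any fixed non-zero value of $f$ in the appropriate branch the component $\cC$ contains exactly one point, and the Puiseux extension preserves this injectivity.

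\textbf{Main obstacle.} The delicate point I expect is the passage from the real-analytic picture over $\cI$ to the Puiseux-series picture: one needs to argue that the semi-algebraic thread, which is a real bounded one-dimensional set parametrized by the real variable $t$, admits a genuine $\Reps$-point at $t=\eps$ whose $\lim_\eps$ coincides with the boundary point on $\{f=0\}$, and that this point lies in the critical locus $W(\pi_i,\cV_\eps)$ and not merely in $\ext(\cK_i,\Reps)$. The route around this is to use the algebraicity of $\cK_i$ together with the curve selection lemma applied to the closure of $\cC$ at $\x$, ensuring the $\eps$-point is unique and lies in the required critical set.
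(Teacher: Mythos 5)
Your proof is correct and takes essentially the same route as the paper: existence and uniqueness of $\cC$ from membership of $\x_e$ in $\cL_i$ together with the locally trivial fibration over $\cI$ (threads mapped diffeomorphically by $\pi_t$, hence $f$ injective on $\cC$), then a point $\x_\eps\in\ext(\cC,\Reps)$ with $f(\x_\eps)^2=\eps^2$ obtained by transferring the interval parametrization to $\Reps$, with $\lim_\eps\x_\eps=\x$ because $\x$ is the unique point of $\overline{\cC}$ satisfying $f=0$. The only (harmless, arguably more self-contained) divergence is that the paper establishes $\x\in\lim_\eps\bigl(W(\pi_i,\cV_\eps)\cup W(\pi_i,\cV_{-\eps})\bigr)\cap\Reps^{n+1}$ by invoking the identity $\lim_\eps\bigl(W(\pi_i,\cV_\eps)\cup W(\pi_i,\cV_{-\eps})\bigr)\cap\Reps^{n+1}=\pi_{\x}\bigl(\overline{\cW_{(0,+\infty)}}\cap\VV(t)\bigr)$ from the proof of \cite[Theorem 12.43]{BPR}, whereas you read this membership off directly from the witness $\x_\eps$ (checking it satisfies the defining equations of the critical locus), which you construct anyway for the second claim.
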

\begin{proof}
Since $\x_e \in (W(\pi_i,\cV_e)\cup W(\pi_i,\cV_{-e}))\cap\RR^{n+1}$
and $0 < e < e_0$, we have $\x_e \in \cL_i$, the existence of
$\cC$ follows naturally. Let $\x$ be the unique point of
$\overline{\cC}$ satisfying $f=0$. Then, the notion $\lim_0$ is
well-defined. From the proof of \cite[Theorem 12.43]{BPR}, we have
that
\[\textstyle{\lim_{\eps} (W(\pi_i,\cV_{\eps})\cup
W(\pi_i,\cV_{-\eps}))\cap \Reps^{n+1} =
\pi_{\x}\left(\overline{\cW_{(0,+\infty)}}\cap \VV(t)\right)}.\]
As $\pi_{\x}\left(\overline{\cW_{(0,+\infty)}}\cap \VV(t)\right)$ is
the set of points corresponding to $f=0$ of $\cL_i$, we deduce that
$\x \in \lim_{\eps} (W(\pi_i,\cV_{\eps})\cup
W(\pi_i,\cV_{-\eps}))\cap\Reps^{n+1}$.

Since the extension $\ext(\cC,\Reps)$ is a connected component of
$\ext(\cL_i,\Reps)$ and homeomorphic to an open interval in $\Reps$,
there exists $\x_{\eps} \in \ext(\cC,\Reps)$ such that $f(\x_{\eps})^2
= \eps^2$. Moreover, since $0 = \lim_{\eps} f(\x_{\eps})^2 =
f(\lim_{\eps} \x_{\eps})^2$ and $\x$ is the only point in
$\overline{\cC}$ satisfying $f=0$, we conclude that $\lim_{\eps}
\x_{\eps} = \x$.
\end{proof}
Now, let $\cR_e$ be a roadmap associated to the algebraic set
$\cV_e\cup \cV_{-e}$, i.e. $\cR_e$ is contained in
$(\cV_e\cup \cV_{-e})\cap\RR^{n+1}$, of at most dimension one and has
non-empty intersection with every connected component of
$(\cV_e\cup\cV_{-e})\cap\RR^{n+1}$. We also require that $\cR_e$
contains $\cup_{i=1}^n (W(\pi_i,\cV_e)\cup W(\pi_i,\cV_{-e}))\cap
\RR^{n+1}$. The proposition below is the key to describe {\sf
IsIsolated}.
\begin{proposition}
\label{proposition:removeeps}
Given $e \in ]0,e_0[$ and $\cI = ]-e_0,0[\cup]0,e_0[$ as above. Let
$\cL = \cup_{i=1}^n \cL_i$ and $\x \in \cP$. Then $\x$ is not isolated
in $\cV\cap\RR^{n+1}$ if and only if there exists $\x'\in \cP$ such
that $\x$ and $\x'$ are connected in $\cP \cup \cL \cup \cR_e$.
\end{proposition}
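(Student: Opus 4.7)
The plan is to prove both directions of the equivalence by transferring the infinitesimal picture underlying Proposition \ref{proposition:main} down to the concrete value $e\in\;]0,e_0[$. The transfer is enabled by the choice $e_0 < \min(\eta,\eta'_1,\ldots,\eta'_n)$: Thom's isotopy lemma gives locally trivial fibrations of $\cV_t$ and of each $\cW_i$ over $\cI = ]-e_0,0[\cup ]0,e_0[$. Extending scalars to $\Reps$, this provides bijections between the connected components of $(\cV_e\cup\cV_{-e})\cap\RR^{n+1}$ and those of $(\cV_\eps\cup\cV_{-\eps})\cap\Reps^{n+1}$, and likewise between the critical loci $W(\pi_i,\cV_{\pm e})$ and $W(\pi_i,\cV_{\pm\eps})$.

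For the forward direction, I assume $\x\in\cP$ is not isolated in $\cV\cap\RR^{n+1}$. Proposition \ref{proposition:main} (whose proof, via Lemmas \ref{lemma:existence}--\ref{lemma:critical}, applies to any non-isolated $\x\in\cV\cap\RR^{n+1}$ with non-zero $x_{n+1}$) then yields an index $i$, a bounded connected component $\cC_\eps$ of $(\cV_\eps\cup\cV_{-\eps})\cap\Reps^{n+1}$, and two critical points $\x_\eps,\x'_\eps\in\cC_\eps\cap(W(\pi_i,\cV_\eps)\cup W(\pi_i,\cV_{-\eps}))$ with $\lim_\eps\x_\eps=\x$ and $\lim_\eps\x'_\eps=\x'\ne\x$. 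By Lemma \ref{lemma:elimit}, these two points lie in the $\Reps$-extensions of two bounded connected components $\cD,\cD'\subset\cL_i$ whose closures at $f=0$ contain $\x$ and $\x'$ respectively. Via the fibration of $\cW_i$ over $\cI$, $\cD$ and $\cD'$ meet $W(\pi_i,\cV_{\pm e})$ at points $\x_e,\x'_e$ lying in the connected component of $(\cV_e\cup\cV_{-e})\cap\RR^{n+1}$ that corresponds to $\cC_\eps$, and $\cR_e$ connects them because $\cR_e$ is connected inside that component and contains every critical point. Concatenating an arc from $\x$ to $\x_e$ inside $\overline{\cD}$, the $\cR_e$-path from $\x_e$ to $\x'_e$, and an arc from $\x'_e$ to $\x'$ inside $\overline{\cD'}$ gives the desired path in $\cP\cup\cL\cup\cR_e$.

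For the reverse direction, let $\gamma$ be a semi-algebraic path from $\x$ to a distinct $\x'\in\cP$ inside $\cP\cup\cL\cup\cR_e$. I would decompose $\gamma$ into finitely many arcs alternately contained in some bounded connected component of some $\cL_{i_k}$ and in $\cR_e$, glued at points of $\cP$ or at critical points in $W(\pi_{i_k},\cV_{\pm e})$. Using the last part of Lemma \ref{lemma:elimit} together with the fibrations of $\cW_{i_k}$ and $\cV_t$ over $\cI$, each sub-arc lifts to the infinitesimal level, and concatenation produces a semi-algebraically connected set $\cC_\eps\subset(\cV_\eps\cup\cV_{-\eps})\cap\Reps^{n+1}$, bounded over $\RR$ since it lies on the sphere $g=0$, satisfying $\{\x,\x'\}\subset\lim_\eps\cC_\eps$. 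As $\x\ne\x'$, Lemma \ref{lemma:existence} concludes that $\x$ is not isolated in $\cV\cap\RR^{n+1}$.

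The main difficulty lies in the forward direction: verifying that the endpoint $\x'=\lim_\eps\x'_\eps$ truly belongs to $\cP$, since \emph{a priori} it is only guaranteed to lie in $\lim_\eps(W(\pi_i,\cV_\eps)\cup W(\pi_i,\cV_{-\eps}))$ for the single index $i$ produced by Proposition \ref{proposition:main}, whereas membership in $\cP$ reflects an intersection over all indices. I expect to resolve this by iteration: if $\x'\notin\cP$, then $\x'$ is itself non-isolated, so the construction can be repeated to extend the path through $\x'$ to yet another endpoint; the finiteness of $\cP$, together with the finiteness of the connected components of $\cR_e$ and of $\cL$, then forces termination at a bona fide point of $\cP\setminus\{\x\}$.
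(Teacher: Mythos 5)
Your forward direction is essentially the paper's own argument: invoke Proposition~\ref{proposition:main}, use the local triviality over $\cI$ guaranteed by the choice of $e_0$ (extended to $\Reps$), identify $\cC_\eps$ with $\ext(\cC_e,\Reps)$ for a bounded connected component $\cC_e$ of $(\cV_e\cup\cV_{-e})\cap\RR^{n+1}$, and join the two limit points through branches of $\cL_i$ and through $\cR_e$, which contains the critical points and has connected intersection with $\cC_e$. The reverse direction, which the paper dismisses as ``immediate using the above techniques,'' you sketch in the same spirit, but one step as written cannot work: the lifted arcs cannot be ``glued at points of $\cP$'' into a single connected subset of $(\cV_\eps\cup\cV_{-\eps})\cap\Reps^{n+1}$, since points of $\cP$ satisfy $f=0$ and therefore have no counterpart on $\cV_{\pm\eps}$. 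The repair is to argue on a single sub-arc: along a path from $\x$ to $\x'$ in $\cP\cup\cL\cup\cR_e$, the set $\cL\cup\cR_e$ lies in $\{0<|f|<e_0\}$, so there is a piece of the path leaving $f=0$ at one point of $\cP$ and returning to $f=0$ at a distinct point; that piece lifts through the fibration (together with Lemma~\ref{lemma:elimit}) to a single component $\cC_\eps$, bounded over $\RR$, whose limit contains two distinct points, and Lemma~\ref{lemma:existence} concludes.

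The genuine gap is in your patch for ``$\x'\in\cP$.'' You are right that Proposition~\ref{proposition:main} only places $\x'=\lim_\eps\x'_\eps$ in $\lim_\eps\bigl(W(\pi_i,\cV_\eps)\cup W(\pi_i,\cV_{-\eps})\bigr)$ for a single index $i$, whereas $\cP$ lies over $\ZZ(\kP)=\VV(f)\cap\bigcap_j\VV\bigl(I_j\cap\QQ[x_1,\ldots,x_n]\bigr)$; note that the paper's proof simply labels this limit $\x'$ and asserts the connection, so you are flagging a point the paper itself does not verify. However, your proposed iteration does not close it: re-running the construction at $\x'$ requires $\x'$ to have non-zero $x_{n+1}$ coordinate (needed for Lemmas~\ref{lemma:existence} and~\ref{lemma:critical}), which is not guaranteed, and even granting that, each step only produces some further point of the finite set $\bigcup_j\lim_\eps\bigl(W(\pi_j,\cV_\eps)\cup W(\pi_j,\cV_{-\eps})\bigr)$ connected to the previous one. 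Finiteness of $\cP$ and of the components of $\cL$ and $\cR_e$ does not force the process to make progress: nothing prevents it from cycling indefinitely among limit points that never lie in $\cP$, so termination ``at a bona fide point of $\cP\setminus\{\x\}$'' is not established. As it stands, this part of your argument is an unproved claim rather than a proof, and it is precisely the delicate point on which the statement's ``only if'' direction hinges.
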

\begin{proof}
Assume first that $\x$ is not isolated. By Proposition
\ref{proposition:main}, there exists $1\leq i\leq n$ and a connected
component $\cC_\eps$ of $(\cV_\eps\cup
\cV_{-\eps})\cap\Reps^{n+1}$, which is bounded over $\RR$, such that
$\cC_\eps\cap (W(\pi_i,\cV_\eps)\cup W(\pi_i,\cV_{-\eps}))$ contains
$\x_\eps$ and $\x'_\eps$ satisfying $\x = \lim_\eps \x_\eps \ne
\lim_\eps \x'_\eps$.  By the choice of $e_0$, there exist a
diffeomorphism $\theta: \cV_{t,\cI} \to \cV_e\times
\cI$ such that $\theta(\cW_{i,\cI}) =
\theta(\cW_{i,e}) \times \cI$.  Using \cite[Exercise
3.2]{BPR}, $\ext(\theta,\Reps)$ is a diffeomorphism between:
\begin{align*} \ext(\cV_{t,\cI},\Reps) & \cong
\ext(\cV_e,\Reps)\times \ext(\cI,\Reps),\\
\ext(\cW_{i,\cI},\Reps) & \cong \ext(\cW_{i,e},\Reps)\times
\ext(\cI,\Reps).
\end{align*}
As $\pi_\x$ is an isomorphism from $\cV_t$ to $\RR^{n+1}$,
there exists a (unique) bounded connected component $\cC_e$ of
$\cV_e\cap \RR^{n+1}$ s.t. $\cC_\eps$ is diffeomorphic to
$\ext(\cC_e,\Reps)$. Moreover, let $L$ and $L'$ be the connected
components of $\ext(\cL_i,\Reps)$ containing $\x_\eps$ and $\x'_\eps$
respectively and $\x_e$ and $\x'_e$ ($\in \ext(\cC_e,\Reps))$ be the
intersections of $\ext(\cC_e,\Reps)$ with $L$ and $L'$
respectively. Then, $\lim_\eps \x_\eps$ ($\lim_\eps L'$) connects
$\lim_\eps \x_e$ ($\lim_\eps \x'_e$) to $\x$ ($\x'$). As $\lim_\eps
\x_e$ and $\lim_\eps \x'_e$ are connected in $\cC_e$, we conclude that
$\x$ and $\x'$ are also connected in $\cP\cup\cL\cup\cR_e$.
\noindent
The reverse implication is immediate using the above techniques
\end{proof}
From Lemma \ref{lemma:elimit} and Proposition
\ref{proposition:removeeps}, any $e$ lying in the interval $]0,e_0[$
defined above can be used to replace the infinitesimal $\eps$. So, we
simply take $e = e_0/2$. For $1\leq i \leq n$, we use a subroutine
{\sf ZeroDimSolve} which takes as input
$\left\{f-e_0/2,g,\frac{\partial f}{\partial x_j}\text{ for all } j
\ne i\right\}$ to compute a zero-dimensional parametrization $\kQ_i$
such that $W(\pi_i,\cV_e) =
\{\x \in \ZZ(\kQ_i) | x_{n+1}\ne 0\}$.

To use Proposition \ref{proposition:removeeps}, we need to compute
$\cR_{e_0/2}$, which we refer to the algorithm {\sf Roadmap} in
\cite{SaSc17}. This algorithm allows us to compute roadmaps for smooth
and bounded real algebraic sets, which is indeed the case of
$(\cV_{e_0/2}\cup\cV_{-e_0/2}) \cap \RR^{n+1}$. First, we call
(another) {\sf Union} that, on the zero-dimensional parametrizations
$\kQ_i$, it computes a zero-dimensional parametrization $\kQ$ encoding
$\cup_{i=1}^n \ZZ(\kQ_i)$. Given the polynomials $f$, $g$, the value
$e_0/2$ and the parametrization $\kQ$, a combination of {\sf Union}
and {\sf Roadmap} returns a one-dimensional parametrization $\kR$
representing $\cR_{e_0/2}$.

Deciding connectivity over $\cP \cup \cL \cup \cR_e$ is done as
follows. We use {\sf Union} to compute a rational parametrization
$\kS$ encoding $\cK\cup\cR_e$. Then, with input $\mathfrak{S}$, $\kP$,
$x_{n+1}\ne 0$ and the inequalities $0 < f < e_0$, we use Newton
Puiseux expansions and cylindrical algebraic decomposition (see
\cite{Duval89,ScSh83}) following \cite{SaSc11}, taking advantage of
the fact that polynomials involved in rational parametrizations of
algebraic curves are bivariate. We denote by {\sf ConnectivityQuery}
the subroutine that takes those inputs and returns $\kP$ and isolating
boxes of the points defined by $\kP$ which are not connected to other
points of $\kP$.
\begin{algorithm}
\KwData{The polynomials $f^\bmA \in \QQ[x_1,\ldots,x_n]$ and $g\in
\QQ[x_1,\ldots,x_{n+1}]$ and the zero-dimensional parametrization
$\kP$.}
\KwResult{$\kP$ with isolating boxes of the isolated points of
$\cV^\bmA\cap \RR^{n+1}$}
\SetAlgoNoLine
$e_0\gets \text{\sf SpecializationValue}(f^\bmA,g)$ \\
\For{$1 \leq i \leq n$}{
$\kQ_i \gets \text{\sf
ZeroDimSolve}\left(\left\{f^\bmA-e_0/2,g,\diff{f^\bmA}{x_j}\text{ for
all }j\ne i\right\}\right)$ \\
$\kK_i \gets \text{\sf ParametricCurve}(f^\bmA,i)$ \\
}
$\kK \gets \text{\sf Union}(\kK_1,\ldots,\kK_n)$ \\
$\kQ \gets \text{\sf Union}(\kQ_1,\ldots,\kQ_n)$\\
$\kR \gets \text{\sf Union}(\text{\sf
  RoadMap}(f^\bmA-e_0/2,g,\kQ),\text{\sf RoadMap}(f^\bmA+e_0/2,g,\kQ))$ \\
$\kS \gets \text{\sf Union}(\kK,\kR)$ \\
$B \gets \text{\sf ConnectivityQuery}(\mathfrak{S},\kP,x_{n+1}\ne 0,0<
f^\bmA < e_0)$\\
\Return $(\kP,B)$
\caption{{\sf IsIsolated}}
\label{algo:isisolated}
\end{algorithm}

\section{Complexity analysis}
\label{section:complexity}
All complexity results are given in the number of arithmetic
operations in $\QQ$. Hereafter, we assume that a generic enough matrix
$\bmA$ is found from a random choice. In order to end the proof of
Theorem~\ref{thm:main}, we now estimate the arithmetic runtime
of the calls to {\sf Candidates} and {\sf IsIsolated}.
\vspace{-0.1cm}
\paragraph{Complexity of Algorithm~\ref{algo:candidates}}
Since $W(\pi_i,\cH_{\eps}^\bmA)$ is the finite algebraic set
associated to $\left\langle f^\bmA-e, \diff{f^\bmA}{x_j}\text{ for all
} j\ne i\right\rangle$, its degree is bounded by $d(d-1)^n$
\cite{Heintz83}.  Consequently, the degree of the output
zero-dimensional parametrization lies in $d^{O(n)}$. Using
\cite[Theorem 6]{Sa05} (which is based on the geometric resolution
algorithm in \cite{GLS01}), it is computed within $d^{O(n)}$
arithmetic operations in $\QQ$. The last step which takes
intersections is done using the algorithm in \cite[Appendix
J.1]{SaSc17} ; it does not change the asymptotic complexity.

We have seen that {\sf GetNormBound} reduces to isolate the real roots
of a zero-dimensional parametrization of degree $d^{O(n)}$. This can
be done within $d^{O(n)}$ operations by Uspensky's algorithm
\cite{RoZi03}.
\paragraph{Complexity of Algorithm~\ref{algo:isisolated}}
Each call to {\sf SpecializationValue} reduces to computing critical
values of $\pi_i$ of a smooth algebraic set defined by polynomials of
degree $\leq d$.  This is done using $(nd)^{O(n)}$ arithmetic
operations in $\QQ$ (see \cite{GS14}). Using \cite{GLS01} for {\sf
ZeroDimSolve} and \cite{Sch03} for {\sf ParametricCurve} does not
increase the overall complexity. The loop is performed $n$ times ;
hence the complexity 
lies in $(nd)^{O(n)}$. All output zero-dimensional parametrizations
have degree bounded by $d^{O(n)}$. Running {\sf Union} on these
parametrizations does not increase the asymptotic complexity. One gets
then parametrizations of degree bounded by $nd^{O(n)}$.  Finally,
using \cite{SaSc17} for {\sf Roadmap} uses $(nd)^{O(n\log(n))}$
arithmetic operations in $\QQ$ and outputs a rational parametrization
of degree lying in $(nd)^{O(n\log(n))}$. The call to {\sf
ConnectivityQuery}, done as explained in \cite{SaSc11} is polynomial
in the degree of the roadmap.

The final steps which consist in calling {\sf Removes} and
undoing the change of variables 
does not change the asymptotic complexity.

Summing up altogether the above complexity estimates, one obtains an
algorithm using $(nd)^{O(n\log(n))}$ arithmetics operations in $\QQ$
at most. This ends the proof of Theorem~\ref{thm:main}.
\vspace{-0.2cm}
\section{Experimental results}\label{sec:experiments}
We report on practical performances of our algorithm. Computations
were done on an Intel(R) Xeon(R) CPU E3-1505M v6 @ 3.00GHz with 32GB
of RAM. We take sums of squares of $n$ random dense quadrics in $n$
variables (with a non-empty intersection over $\RR$) ; we obtain {\em
dense quartics} defining a finite set of points. Timings are given
in seconds (s.), minutes (m.), hours (h.) and days (d.).

We used Faug\`ere's {\sc FGb} library for computing Gr\"obner bases in
order to perform algebraic elimination in Algorithms~\ref{algo:main},
\ref{algo:candidates} and \ref{algo:isisolated}.  We also used our C
implementation for bivariate polynomial system solving (based on
resultant computations) which we need to analyze connectivity queries
in roadmaps. Timings for Algorithm~\ref{algo:candidates} are given in
the column {\sc cand} below. Timings for the computation of the
roadmaps are given in the column {\sc rmp} and timings for the
analysis of connectivity queries are given in the column {\sc qri}
below.

Roadmaps are obtained as the union of critical
loci of some maps in slices of the input variety \cite{SaSc17}. We
report on the highest degree of these critical loci in the column {\sc
srmp}. The column {\sc sqri} reports on the maximum degree of the
bivariate zero-dimensional system we need to study to analyze
connectivity queries on the roadmap.

None of the examples we considered could be tackled using the
implementations of Cylindrical Algebraic Decomposition algorithms in
Maple and Mathematica.

We also implemented \cite[Alg. 12.16]{BPR} using the {\sc Flint} C
library with evaluation/interpolation techniques instead to tackle
coefficients involving infinitesimals. This algorithm only
computes sample points per connected components. {\em That
implementation was not able to compute sample points of the
input quartics for any of our examples}. We then report in the column
[BPR] on the degree of the zero-dimensional system which is expected
to be solved by \cite[BPR]{BPR}.  This is to be compared with columns
{\sc srmp} and {\sc sqri}.
\begin{center}
\begin{tabular}{cccccccc}
$n$& {\sc cand} & {\sc rmp} & {\sc qri} & total & {\sc srmp} & {\sc
                                                               sqri} & [BPR] \\
  \hline
  4& $2$ s.& $15$ s.& $33$ s.&50 s. & 36 & 359 & 7290 \\
  5& $<10$ min. & 1h. & 7h. & 8 h. & 108 & 4644 & 65\ 610\\
  6 & $<12h$& 2 d. &18 d. & 20 d. & 308&47952 & 590\ 490 \\
\end{tabular}
\end{center}
\bibliography{mybib}
\bibliographystyle{acm}

\end{document}